\documentclass[12pt]{article}
\usepackage{enumerate}
\usepackage{amsfonts}
\usepackage{latexsym}
\usepackage{color}
\usepackage{graphicx}
\usepackage{wrapfig}
\usepackage{caption}
\usepackage{float}

\usepackage{amsmath,amssymb,amsthm}
\usepackage{bm}
\usepackage{braket}

\def\D{\mathcal{D}}
\def\HH{\mathcal{H}}

\def\St{\mathfrak{S}}
\def\P{\mathfrak{P}}

\def\C{\mathfrak{C}}
\def\T{\mathfrak{T}}
\def\B{\mathfrak{B}}

\newcommand{\supp}{\mathrm{supp}}
\newcommand{\rank}{\mathrm{rank}}

\newcommand{\id}{\mathrm{Id}}
\newcommand{\Tr}{\mathrm{Tr}}
\newcommand{\ext}{\mathrm{ext}}

\newcommand{\shs}{\hspace{1pt}}
\newcounter{defin}  \newcounter{lemma}  \newcounter{theorem}
\newcounter{proposition} \newcounter{corol}  \newcounter{remark} \newcounter{example}
\newenvironment{lemma}{\par\refstepcounter{lemma}     \textbf{Lemma \thelemma.} }{\rm\par}
\newenvironment{theorem}{\par\refstepcounter{theorem}     \textbf{Theorem \thetheorem.}\ }{\rm\par}
\newenvironment{proposition}{\par\refstepcounter{proposition}     \textbf{Proposition \theproposition.}\ }{\rm\par}
\newenvironment{corollary}{\par\refstepcounter{corol}     \textbf{Corollary \thecorol.} }{\rm\par}

\newenvironment{remark}{\par\refstepcounter{remark}     \textbf{Remark \theremark.}}{\rm\par}
\newenvironment{example}{\par\refstepcounter{example}     \textbf{Example \theexample.}}{\rm\par}

\begin{document}

\title{The $\varepsilon$-smoothed Entanglement of Formation and its relation to the Moreau-Yosida approximation}




\author{M.E.~Shirokov\footnote{email:msh@mi.ras.ru}\\
Steklov Mathematical Institute, Moscow, Russia}
\date{}
\maketitle

\begin{abstract}
 We describe a family of convex continuous functions $E_{F,\varepsilon}$, $\varepsilon\in(0,1]$, on the set of states of a bipartite quantum system (consisting of finite-dimensional or infinite-dimensional subsystems) obtained by applying the  Mora-Piani-Briegel construction of  $\varepsilon$-measure (proposed in \cite{MPB}) to the Entanglement of Formation (EoF) with the use of the trace norm distance in the role of basic metric. These functions are non-increasing in \(\varepsilon\) and converge pointwise  to the EoF as $\varepsilon\to0^+$.  We conducted a detailed analysis of  properties of the functions $E_{F,\varepsilon}$, using (among others) the relationship between the family $\{E_{F,\varepsilon}\}_{\varepsilon\in[0,1]}$ and the family $\{E^\lambda_F(\rho)\}_{\lambda>0}$ of Moreau-Yosida approximations of the EoF introduced and explored in \cite{EF}.

We obtain a variational expression for the functions $E_{F,\varepsilon}$  (which for $\varepsilon=0$  turns into the standard variational expression for the EoF).
We also describe the weak subadditivity property of the functions $E_{F,\varepsilon}$ (after proving  that the usual subadditivity does not hold for these functions) and analyse their continuity properties.

The semicontinuity bounds for the EoF (presented in \cite{LCB+,AOE}) allow us to obtain easily computable upper bounds on the difference $E_F(\rho)-E_{F,\varepsilon}(\rho)$ for a given state $\rho$.  These bounds give easily computable bounds on the rate of uniform convergence of the function $E_{F,\varepsilon}$ to the EoF  as $\varepsilon\to0^+$ on the sets of states with bounded rank/energy of one of the marginal states.

Applicability of the obtained results to other entanglement measures, in particular, to the relative entropy of entanglement, is discussed.

As an auxiliary result, it is shown that for any state of an infinite-dimensional bipartite quantum system, there exists a nearest separable state w.r.t. the trace norm distance. The non-uniqueness of such a state is also shown.
\end{abstract}
\newpage

\tableofcontents

\section{Introduction}

The idea of smoothing quantities over a neighbourhood of an
imperfectly known object has a history predating its application
to entanglement measures. In information-theoretic and
cryptographic settings, entropy-smoothing ideas were used in
the construction of pseudorandom generators
\cite{ILL}. A systematic framework of
smooth entropy was subsequently developed by Cachin and Maurer
\cite{C&M}. Renner and Wolf introduced smooth
R\'enyi entropies defined through optimization over
neighbourhoods of probability distributions
\cite{R&W}. These works provide conceptual and
mathematical precedents for the $\varepsilon$-measure
construction considered here, although they concern entropy
and randomness extraction rather than entanglement measures.\smallskip

Mora, Piani, and Briegel introduced in  \cite{MPB} a general $\varepsilon$-measure
construction which associates with an entanglement measure $E$ and a
distance $D$ the quantity
\begin{equation}
E_{\varepsilon}^{(D)}(\rho)
=
\inf_{\sigma:\,D(\rho,\sigma)\leq\varepsilon} E(\sigma).
\label{eq:MPB-general}
\end{equation}
The construction is intended to quantify how much entanglement is
guaranteed to be present when the state is known only up to a prescribed
precision.  The same\break $\varepsilon$-measure construction has
subsequently been considered in broader quantum resource theories
\cite{LuoMengWang2024}.

In this article, we use the trace norm distance
\begin{equation}
T(\rho,\sigma)
\doteq
\frac12\|\rho-\sigma\|_1
\end{equation}
in the role of metric $D$ and define the corresponding $\varepsilon$-smoothed Entanglement of Formation ($\varepsilon$-smoothed EoF) as follows
\begin{equation}
E_{F,\varepsilon}(\rho)
\doteq
\inf_{\sigma:\,T(\rho,\sigma)\leq\varepsilon}
E_F(\sigma)
\label{eq:smooth-EF}
\end{equation}
for an arbitrary state $\rho$ of a bipartite system and  $\varepsilon\in[0,1]$ ($E_{F,0}(\rho)=E_F(\rho)$).

In finite dimensions the infimum in (\ref{eq:smooth-EF}) is attained.  In infinite dimensions
we keep the infimum notation.

In Section 3,  we describe basic properties of the $\varepsilon$-smoothed EoF in general infinite-dimensional settings
and consider its relation to the Moreau--Yosida approximation of the EoF defined as (cf. \cite{EF})
\begin{equation}\label{EFA}
 E_F^\lambda(\rho)=\inf_{\sigma}\left\{E_F(\sigma)+\frac{1}{\lambda}\,T(\rho,\sigma)\right\},\qquad \lambda>0,
\end{equation}
We prove the variational relation
\[
E_F^\lambda(\rho)
=
\inf_{0\leq\varepsilon\leq1}
\left\{
E_{F,\varepsilon}(\rho)+\frac{\varepsilon}{\lambda}
\right\},\qquad \lambda>0,
\]
and, using convex duality, its converse
\begin{equation}\label{eq:inverse-duality}
E_{F,\varepsilon}(\rho)
=
\sup_{\lambda>0}
\left\{
E_F^\lambda(\rho)-\frac{\varepsilon}{\lambda}
\right\},\qquad \varepsilon\in[0,1].
\end{equation}

By these relations for any given state $\rho$ the families $\{E_F^\lambda(\rho)\}_{\lambda>0}$ and $\{E_{F,\varepsilon}(\rho)\}_{\varepsilon\in[0,1]}$
contain the same information about this state $\rho$. Moreover, they allow us to analyse properties of one of these families by exploring properties of the other.
In this article, we  use the detailed investigation of the Moreau--Yosida approximation of the EoF  made in \cite{EF} to prove new results
for the $\varepsilon$-smoothed EoF (and to simplify the general proofs presented in \cite{MPB} in this special case).\smallskip

In particular, dual representation \eqref{eq:inverse-duality} allows us to obtain a variational expression for the functions $E_{F,\varepsilon}$  (which for $\varepsilon=0$  turns into the standard variational expression for the EoF).\smallskip

After proving  that the usual subadditivity does not hold for the function $E_{F,\varepsilon}$, we discuss the  weak
subadditivity relation: whenever
$\varepsilon_1,\varepsilon_2\geq0$ and
$\varepsilon_1+\varepsilon_2\leq1$, the following inequality holds
\begin{equation}
E_{F,\varepsilon_1+\varepsilon_2}(\rho\otimes\sigma)
\leq
E_{F,\varepsilon_1}(\rho)
+
E_{F,\varepsilon_2}(\sigma),
\label{eq:weak-subadditivity}
\end{equation}
where the tensor product is treated with respect to the
corresponding bipartition of the combined systems. This property can be either proved directly or deduced easily
from the subadditivity of $E_F^\lambda$ (proved in \cite{EF}) and expression
\eqref{eq:inverse-duality}. \smallskip

In Section 4, we  study the relation between fixed-radius smoothing
and regularization. Define
\begin{equation}
E^\infty_{F,\varepsilon}(\rho)
=
\limsup_{n\to\infty}
\frac1n E_{F,\varepsilon}(\rho^{\otimes n}),
\label{eq:regularized-smoothed-eof}
\end{equation}
and, whenever the regularized EoF
\[
E_F^\infty(\sigma)
=
\lim_{n\to\infty}\frac1n E_F(\sigma^{\otimes n})
\]
is defined, introduce the alternative quantity
\begin{equation}
 [E_F^\infty]_\varepsilon(\rho)
 \doteq \inf_{\substack{\sigma:\shs T(\rho,\sigma)\le\varepsilon}}
      E_F^\infty(\sigma).
 \label{eq:smoothed-regularized-eof-intro}
\end{equation}
These constructions perform the smoothing and regularization
operations in different orders. An example showing that they cannot be
identified is presented and some related questions are considered.

Attempts to prove the positivity of $E_{F,\varepsilon}(\rho)$  for states $\rho$ whose $\varepsilon$-vicinity  does not contain separable states (in the infinite-dimensional settings) lead us to
the question of attainability of the infimum in the definition
\[
 d\!\left(\rho,\St^{AB}_{\rm sep}\right)=\frac{1}{2}\inf_{\sigma\in\St_{\rm sep}^{AB}}\|\rho-\sigma\|_1
\]
of the distance from an arbitrary state $\rho$ of a system $AB$ to the set $\St^{AB}_{\rm sep}$ of all separable states
of this  system. This question is
resolved in the positive sense (despite the non-compactness of the set $\St^{AB}_{\rm sep}$) using the weak* closedness of the cone generated by $\St^{AB}_{\rm sep}$ \cite{L-Sh-1}. In other words, it is shown that for any state of an infinite-dimensional bipartite quantum system, there exists a nearest separable state w.r.t. the trace norm distance. This result (which seems useful beyond the scope of this article), its proof and an example showing the non-uniqueness of a nearest separable state are presented in the Appendix.

\section{Preliminaries}

Throughout the article, we assume that $\mathcal{H}_X$ is a separable Hilbert space describing a quantum system $X$,
$\mathfrak{B}_{\rm sa}(\mathcal{H}_X)$ is the real Banach space of all Hermitian bounded operators on $\mathcal{H}_X$ with the operator norm $\|\cdot\|$ and $\mathfrak{T}(\mathcal{H}_X)$ is the
(complex) Banach space of all trace-class
operators on $\mathcal{H}_X$  with the trace norm $\|\!\cdot\!\|_1$. Write
$\mathfrak{S}(\mathcal{H}_X)$ for   the set of quantum states (positive operators
in $\mathfrak{T}(\mathcal{H}_X)$ with unit trace) \cite{H-SCI,Wilde}. Denote the set of pure states (one-rank projectors) in  $\mathfrak{S}(\mathcal{H}_X)$ by  $\ext\mathfrak{S}(\mathcal{H}_X)$.

Write $I_{X}$ for the unit operator on a Hilbert space
$\mathcal{H}_X$ and  $\id_{X}$ for the identity
transformation of the Banach space $\mathfrak{T}(\mathcal{H}_X)$.

The \emph{von Neumann entropy} of a quantum state
$\rho \in \mathfrak{S}(\HH)$ is  defined by the formula
$S(\rho)=\operatorname{Tr}\eta(\rho)$, where  $\eta(x)=-x\ln x$ if $x>0$
and $\eta(0)=0$. It is a concave lower semicontinuous function on the set~$\mathfrak{S}(\HH)$ taking values in~$[0,+\infty]$ \cite{H-SCI,L-2,W}.\smallskip

A finite or
countable collection $\{\rho_{i}\}$ of states
with a probability distribution $\{p_{i}\}$ is conventionally called
\textit{discrete ensemble} and denoted by $\{p_{i},\rho_{i}\}$. The state
$\bar{\rho}\doteq\sum_{i}p_{i}\rho_{i}$ is called \emph{average state} of this  ensemble. \smallskip

A \textit{generalized (continuous) ensemble} is defined as
a Borel probability measure on the set of quantum states \cite{H-SCI,EM}. We denote by $\P(\mathcal{H})$ the set of all Borel probability measures on $\ext\mathfrak{S}(\mathcal{H})$ equipped with the topology of weak convergence
\cite{Bog,Par}.\footnote{The weak convergence of a sequence $\{\mu_n\}\subset\P(\mathcal{H})$  to a measure $\mu_0\in\P(\mathcal{H})$ means that
$\,\lim_{n\rightarrow\infty}\int f(\rho)\mu_n(d\rho)=\int f(\rho)\mu_0(d\rho)\,$
for any continuous bounded function $f$ on $\,\St(\HH)$.}
 The set $\P(\mathcal{H})$ can be treated as a complete
separable metric space containing the dense subset $\P_{\rm d}(\mathcal{H})$ of discrete measures (corresponding to discrete ensembles) \cite{Par}. The average state of a generalized
ensemble $\mu \in \P(\mathcal{H})$ is the barycenter of the measure
$\mu $ defined by the Bochner integral
\begin{equation}\label{bd}
\bar{\rho}(\mu )=\int_{\ext\mathfrak{S}(\mathcal{H})}\rho \mu (d\rho ).
\end{equation}

The Entanglement of Formation (EoF) is one of the basic entanglement measures in bipartite quantum systems \cite{P&V,4H}. For quantum systems  $A$ and $B$  (of any finite or infinite dimensions)  the EoF of a state $\rho$ of the composite system $AB$ is defined as the maximal \emph{closed} (lower semicontinuous) convex extension of the function
$\rho\mapsto S(\rho_A)=S(\rho_B)$ on the set $\ext\St(\HH_{AB})$ of pure states of the system $AB$ to the set $\St(\HH_{AB})$ of all states of this system, which can be defined as
\begin{equation}\label{EF-d}
E_F(\rho)=\inf_{\sum_k\!p_k\rho_k=\rho}\sum_kp_kS([\rho_k]_A)=\inf_{\int\varrho\mu(d\varrho)=\rho}\int_{\ext\St(\HH_{AB})}S(\varrho_A)\mu(d\varrho),
\end{equation}
where the first infimum is over all discrete (finite or countable) ensembles $\{p_k, \rho_k\}$ of pure states in $\St(\HH_{AB})$ with the average state $\rho$ and
the second infimum is over all  measures in $\P(\mathcal{H}_{AB})$ with the barycenter $\rho$ \cite{Bennett,4H,EoF}.\footnote{Until recently, two versions of the EoF had to be considered in the infinite-dimensional settings, defined in terms of discrete and continuous convex roof extensions. In \cite{EoF} this ambiguity has been eliminated by proving the coincidence of these two extensions.}\smallskip

\begin{remark}\label{imp} (generalization of Theorem 6 in \cite{V&P}) The second expression in (\ref{EF-d}) allows us to prove good analytical properties of the  EoF  (global lower semicontinuity, monotonicity under generalized LOCC, etc. \cite{EM,Lami-new}), while the first one   shows that
\begin{equation}\label{E-ineq}
E(\rho)\leq E_F(\rho)\qquad \rho\in\St(\HH_{AB})
\end{equation}
for any $\sigma$-convex function $E$ on $\St(\HH_{AB})$ (in terms of \cite{EM})\footnote{It means that the function $E$ satisfies discrete Jensen inequality $\,E(\sum_ip_i\sigma_i)\leq \sum_ip_iE(\sigma_i)$.} such that $E(\sigma)=S(\sigma_A)$ for any state $\sigma$ in $\ext\St(\HH_{AB})$ \emph{not assuming
global lower semicontinuity of $E$}.

In quantum information theory, at least three entanglement measures are used for which it is not difficult to prove the $\sigma$-convexity on the set of states of
a bipartite infinite-dimensional quantum system, but the global lower semicontinuity has not yet been proven.
These are the squashed entanglement  $E_{sq}$,
the $c$-squashed entanglement $E_{c\textrm{-}sq}$ and the conditional entanglement of mutual-information $E_{I}$  \cite{Tucci,C&W,Y&C,CE-II}.\footnote{The definitions of these entanglement measures valid for all states of  infinite-dimensional bipartite quantum system can be found in \cite[Examples 4 and 6, Section IV-C]{AM+S}.} So, we may now assert that inequality (\ref{E-ineq}) holds with $E=E_{sq},E_{c\textrm{-}sq},E_I$ in general infinite-dimensional settings.
\end{remark}\medskip

For our purposes it is essential that the EoF defined in (\ref{EF-d}) is a convex lower  semicontinuous function on the set $\St(\HH_{AB})$ and that the second infimum in (\ref{EF-d}) is always attained \cite[Section 5]{EM}. Note that the first infimum in (\ref{EF-d}) is not attained if $\rho$ is a countably-non-decomposable separable state \cite{EoF}.

By introducing the affine lower semicontinuous function
\begin{equation}\label{SA-def}
  S_A(\mu)\doteq\int_{\ext\St(\HH_{AB})}S(\rho_A)\mu(d\rho).
\end{equation}
on the set $\P(\mathcal{H}_{AB})$ the definition (\ref{EF-d}) can be rewritten as
\begin{equation}\label{EF-d+}
  E_F(\rho)=\inf_{\mu\in\P^{\shs\rm d}_\rho(\mathcal{H}_{AB})}S_A(\mu)=\inf_{\mu\in\P_\rho(\mathcal{H}_{AB})}S_A(\mu)=\min_{\mu\in\P_\rho(\mathcal{H}_{AB})}S_A(\mu),
\end{equation}
where $\P_\rho(\mathcal{H}_{AB})$ is the subset of $\P(\HH_{AB})$ consisting of measures $\mu$ such that $\,\bar{\rho}(\mu)=\rho$
and $\P^{\shs\rm d}_\rho(\mathcal{H}_{AB})$ is  the subset of $\P_\rho(\HH_{AB})$ consisting of discrete (purely atomic) measures.
\medskip

Let $H$ be a positive (semi-definite)  operator on a Hilbert space $\mathcal{H}$ (we will always assume that positive operators are self-adjoint). Write  $\mathcal{D}(H)$ for the domain of $H$. For any positive operator $\rho\in\T(\HH)$ we will define the quantity $\Tr H\rho$ by the rule
\begin{equation}\label{H-fun}
\Tr H\rho=
\left\{\begin{array}{ll}
        \sup_n \Tr P_n H\rho\;\; &\textrm{if}\;\;  \supp\rho\subseteq {\rm cl}(\mathcal{D}(H))\\
        +\infty\;\;&\textrm{otherwise}
        \end{array}\right.
\end{equation}
where $P_n$ is the spectral projector of $H$ corresponding to the interval $[0,n]$ and ${\rm cl}(\mathcal{D}(H))$ is the closure of $\mathcal{D}(H)$. If
$H$ is the Hamiltonian (energy observable) of a quantum system described by the space $\HH$ then
$\Tr H\rho$ is the mean energy of a state $\rho$.

For any positive operator $H$ the set
$$
\C_{H,E}=\left\{\rho\in\St(\HH)\,|\,\Tr H\rho\leq E\right\}
$$
is convex and closed (since the function $\rho\mapsto\Tr H\rho$ is affine and lower semicontinuous). It is nonempty if $E> E_0$, where $E_0$ is the infimum of the spectrum of $H$.

In infinite dimensions, the von Neumann entropy is continuous on the set $\C_{H,E}$ for any $E> E_0$ if and only if the operator $H$ satisfies  the \emph{Gibbs condition}
\begin{equation}\label{H-cond}
  \Tr\, e^{-\beta H}<+\infty\quad\textrm{for all}\;\,\beta>0
\end{equation}
and the supremum of the entropy on this set is attained at the \emph{Gibbs state}
\begin{equation*}
\gamma_H(E)\doteq \frac{e^{-\beta_H(E) H}}{\Tr e^{-\beta_H(E) H}},
\end{equation*}
where the parameter $\beta_H(E)$ is determined by the equation $\Tr H e^{-\beta H}=E\Tr e^{-\beta H}$,
and it is assumed that $\gamma_H(E)|\varphi\rangle=0$ for any $\varphi\in \D(H)^{\perp}$ \cite{W}.

Condition (\ref{H-cond}) can be valid only if $H$ is an unbounded operator having  discrete spectrum of finite multiplicity. It means that
this operator has the spectral representation
  \begin{equation}\label{H-form}
H=\sum_{k=1}^{+\infty} h_k |\tau_k\rangle\langle\tau_k|,
\end{equation}
where $\mathcal{T}\doteq\left\{\tau_k\right\}_{k=1}^{+\infty}$ is the orthonormal
system of eigenvectors of $H$ corresponding to the \emph{nondecreasing} unbounded sequence $\left\{h_k\right\}_{k=1}^{+\infty}$ of its eigenvalues. In this case
$\D(H)=\{ \varphi\in\HH_\mathcal{T}\,| \sum_{k=1}^{+\infty} h^2_k |\langle\tau_k|\varphi\rangle|^2<+\infty\}$, where $\HH_\mathcal{T}$ is the closed linear span of $\mathcal{T}$.

For any positive operator $H$ satisfying condition (\ref{H-cond}) we will use the function
\begin{equation}\label{F-def}
F_{H}(E)\doteq\sup_{\rho\in\C_{H,E}}S(\rho)=S(\gamma_H(E))=\beta_H(E)E+\ln Z_H(E)
\end{equation}
on $[h_1,+\infty)$, where $Z_H(E)=\Tr e^{-\beta_H(E)H} $  and $h_1$ is the minimal eigenvalue of $H$.

By Proposition 1 in \cite{EC} the Gibbs condition (\ref{H-cond}) is equivalent to the following asymptotic property
\begin{equation}\label{H-cond-a}
  F_{H}(E)=o\shs(E)\quad\textrm{as}\quad E\rightarrow+\infty.
\end{equation}

For example, if $\,N\doteq a^\dag a\,$ is the number operator of a quantum oscillator then $F_{\hat{N}}(E)=g(E)$, where
\begin{equation}\label{g-def}
 g(x)=(x+1)\ln(x+1)-x\ln x,\;\, x>0,\qquad g(0)=0.
\end{equation}

We will often assume that
\begin{equation}\label{star}
  h_1=\inf\limits_{\|\varphi\|=1}\langle\varphi\vert H\vert\varphi\rangle=0.
\end{equation}

For a nonnegative function $f$ on $[0,1]$ we denote  its non-decreasing
envelope by $f^{\uparrow}$, i.e.
\begin{equation*}
  f^{\uparrow}(x)\doteq \sup_{t\in[0,x]} f(t),\quad x\in[0,1].
\end{equation*}
In particular, we will  use  the \emph{nondecreasing concave} continuous function
\begin{equation}\label{h+}
h^{\uparrow}(x)=\left\{\begin{array}{ll}
        h(x) &\textrm{if}\;\;  x\in[0,\frac{1}{2}]\\
        \ln2\quad& \textrm{if}\;\;  x\in(\frac{1}{2},1].
        \end{array}\right.
\end{equation}
where $\,h(x)=\eta(x)+\eta(1-x)\,$ is the binary entropy.

\section{The $\varepsilon$-smoothed Entanglement of Formation}

\subsection{Definition and basic  properties}

Applying the Mora-Piani-Briegel construction (described in \cite{MPB}) to the EoF  and taking definition (\ref{EF-d+}) into account  we obtain the $\varepsilon$-smoothed Entanglement of Formation ($\varepsilon$-smoothed EoF) defined
for an arbitrary state $\rho$ of a bipartite system $AB$ and  $\varepsilon\in[0,1]$
as follows
\begin{equation}
E_{F,\varepsilon}(\rho)
\doteq
\inf_{\substack{\sigma\in\St(\HH_{AB})\\T(\rho,\sigma)\leq\varepsilon}}
E_F(\sigma)=\inf_{\mu\in\P^{\shs\rm d}_{\rho,\varepsilon}(\mathcal{H}_{AB})}S_A(\mu)=\inf_{\mu\in\P_{\rho,\varepsilon}(\mathcal{H}_{AB})}S_A(\mu),
\label{SEF-def}
\end{equation}
where $S_A$ is the function defined in (\ref{SA-def}), $\P_{\rho,\varepsilon}(\mathcal{H}_{AB})$ is the subset of $\P(\HH_{AB})$ consisting of measures $\mu$ such that
$$
\bar{\rho}(\mu)\in U_\varepsilon(\rho)\doteq\{\sigma\in\St(\mathcal{H}_{AB})\,|\, T(\rho,\sigma)\leq \varepsilon\}.
$$
and $\P^{\shs\rm d}_{\rho,\varepsilon}(\mathcal{H}_{AB})$ is the subset of $\P_{\rho,\varepsilon}(\mathcal{H}_{AB})$ consisting of discrete measures $\mu$.
Note that, in contrast to  (\ref{EF-d+}), we cannot assert that the third infimum in (\ref{SEF-def}) is attained, since the set $\P_{\rho,\varepsilon}(\mathcal{H}_{AB})$
is not compact (in contrast to the set $\P_{\rho}(\mathcal{H}_{AB})$).\smallskip

In the following theorem we describe basic properties of the function $E_{F,\varepsilon}$.\smallskip

\begin{theorem}\label{P-1} \emph{Let $A$ and $B$ be quantum systems of any dimensions.}\smallskip

\noindent A) \emph{For a given $\varepsilon\in(0,1]$ and a state $\rho$ in $\St(\HH_{AB})$
the following inequality holds
\[
E_{F,\varepsilon}(\rho)\leq \ln n<+\infty,
\]
where \(n\) is the minimal number such that either
\[
\sum_{i>n}\lambda_i^A\leq\varepsilon^2
\qquad\text{or}\qquad
\sum_{i>n}\lambda_i^B\leq\varepsilon^2,
\]
where $\{\lambda_i^A\}_{i=1}^{+\infty}$ and
$\{\lambda_i^B\}_{i=1}^{+\infty}$ are the sequences of eigenvalues of
the states $\rho_A$ and $\rho_B$, respectively, arranged in
nonincreasing order.}\footnote{$\lambda_i^A$ and $\lambda_i^B$ may be equal to zero for large $i\,$ if either $\rho_A$ or $\rho_B$  is a finite-rank state.}    \smallskip

\noindent B) \emph{The equality  $\,E_{F,\varepsilon}(\rho)=0\,$ holds if and only if the closed $\varepsilon$-vicinity $U_\varepsilon(\rho)$ of the state $\rho$ contains separable states. Moreover,
\begin{equation}
 E_{F,\varepsilon}(\rho)
 \geq 2\big[d\!\left(\rho,\St_{\rm sep}^{AB}\right)-\varepsilon\big]_+^2\qquad ([x]_+\doteq\max\{x,0\}),
 \label{eq:EF-lower-bound}
\end{equation}
where
\begin{equation}\label{d+}
 d\!\left(\rho,\St^{AB}_{\rm sep}\right)=\inf_{\sigma\in\St_{\rm sep}^{AB}}T(\rho,\sigma)
\end{equation}
is the distance from a state $\rho\in\St(\mathcal H_{AB})$ to the set $\St_{\rm sep}^{AB}$ of all separable states.}\footnote{The proof of this claim is based on the fact that the infimum in (\ref{d+}) is attained in general infinite-dimensional settings (Lemma \ref{atd} in the Appendix).}\smallskip

\noindent C) \emph{If $\,E_F(\rho)<+\infty\,$ then}
\begin{equation*}\label{scb+}
E_{F,\varepsilon}(\rho)\leq (1-\varepsilon)E_F(\rho)\qquad \forall \varepsilon\in[0,1].
\end{equation*}

\noindent D) \emph{For an arbitrary  state $\rho\in\St(\HH_{AB})$ the function  $\,\varepsilon \mapsto E_{F,\varepsilon}(\rho)\in[0,+\infty]\,$ is continuous, non-increasing and convex on $[0,1]$. In particular,}
$$
E_{F,\varepsilon}(\rho)\nearrow E_{F,0}(\rho)\doteq E_F(\rho)\leq +\infty \quad\textit{ as }\quad \varepsilon\searrow 0^+\quad\emph{ and }\quad E_{F,1}(\rho)=0.
$$

\noindent E) \emph{If $\,E_F(\rho)<+\infty\,$ and $\,\varepsilon\in(0,1]\,$ then}
\begin{equation}\label{scb}
E_{F,\varepsilon}(\rho)-E_{F,\varepsilon}(\sigma)\leq \frac{T(\rho,\sigma)E_F(\rho)}{\varepsilon+T(\rho,\sigma)}\qquad \forall \sigma\in\St(\HH_{AB}).
\end{equation}

\noindent F) \emph{For a given $\,\varepsilon\in(0,1]$  the function
$E_{F,\varepsilon}$ is convex and  continuous on $\St(\HH_{AB})$. It is Lipschitz continuous on any
subset  $\,\St\subset\St(\HH_{AB})$ such that $\,M_\St\doteq\sup_{\rho\in \St}E_F(\rho)$ is finite:}
$$
|E_{F,\varepsilon}(\rho)-E_{F,\varepsilon}(\sigma)|\leq\frac{M_\St T(\rho,\sigma)}{\varepsilon+T(\rho,\sigma)}\leq\frac{M_\St}{\varepsilon}\,T(\rho,\sigma)\qquad \forall \rho,\sigma\in\St.
$$

\noindent G) \emph{Let $\,\rho$ be a given state in $\St(\HH_{AB})$. If  $\,\Phi$ is a  transformation of $\,\St(\HH_{AB})$ such that
$$
E_F(\Phi(\sigma))\leq E_F(\sigma)\quad\textit{ and }\quad \|\Phi(\rho)-\Phi(\sigma)\|_1\leq \|\rho-\sigma\|_1
$$
for all $\sigma$ in $\St(\HH_{AB})$ then
$\,E_{F,\varepsilon}(\Phi(\rho))\leq E_{F,\varepsilon}(\rho)\,$
for all  $\varepsilon\in[0,1]$.}\medskip

\noindent  H) \emph{Let $\,\rho$ be a given state in $\St(\HH_{AB})$. If $\,\Phi$ is a  transformation of $\,\St(\HH_{AB})$ such that
$$
\Phi(\rho)=\rho,\quad E_F(\Phi(\sigma))\leq E_F(\sigma)\quad\textit{ and }\quad \|\Phi(\rho)-\Phi(\sigma)\|_1\leq \|\rho-\sigma\|_1
$$
for all $\sigma$ in $\St(\HH_{AB})$ then for any $\varepsilon\in[0,1]$
\begin{itemize}
  \item the first infimum in (\ref {SEF-def}) can be taken over the set
$$
\Phi(\St(\HH_{AB}))\cap U_\varepsilon(\rho)\doteq \left\{\Phi(\varsigma)\,|\, \varsigma\in\St(\HH_{AB}), T(\rho,\Phi(\varsigma))\leq \varepsilon\right\}
$$
  \item the third and the second infima in (\ref {SEF-def}) can be taken, respectively, over the set
$$
 \left\{\mu\in\P_{\rho,\varepsilon}(\HH_{AB})\,|\, \bar{\rho}(\mu)\in\Phi(\St(\HH_{AB}))\cap U_\varepsilon(\rho)\right\}
$$
and its subset consisting of discrete measures.
\end{itemize}}
\end{theorem}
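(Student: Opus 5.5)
I would prove the eight claims essentially in the order stated, with one exception: the convexity in D and F comes from a single "mixing" argument, which I would set up first.

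\textbf{The mixing argument.} Let $\rho_1,\rho_2$ be states, $\varepsilon_1,\varepsilon_2\in[0,1]$ and $p\in[0,1]$. Choose $\sigma_i\in U_{\varepsilon_i}(\rho_i)$ with $E_F(\sigma_i)$ close to $E_{F,\varepsilon_i}(\rho_i)$. The mixture $\bar\sigma=p\sigma_1+(1-p)\sigma_2$ then lies in $U_{p\varepsilon_1+(1-p)\varepsilon_2}(p\rho_1+(1-p)\rho_2)$ by the triangle inequality for the trace norm. Convexity of $E_F$ now gives joint convexity of $(\varepsilon,\rho)\mapsto E_{F,\varepsilon}(\rho)$.
\begin{itemize}
\item Fixing $\rho$ gives convexity in $\varepsilon$ (part D).
\item Fixing $\varepsilon$ gives convexity in $\rho$ (part F).
\end{itemize}

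\textbf{Parts C, D and E.} For C, take $\sigma=(1-\varepsilon)\rho+\varepsilon\varsigma$ with $\varsigma$ a product state. Then $T(\rho,\sigma)\le\varepsilon$ and $E_F(\sigma)\le(1-\varepsilon)E_F(\rho)$. For D, monotonicity in $\varepsilon$ is trivial, and $E_{F,1}=0$ because $U_1(\rho)$ is the whole state space. A convex function that is finite on $(0,1]$ is continuous on the open interval. At $\varepsilon=1$, the bound $E_{F,\varepsilon}\le\frac{1-\varepsilon}{1-\varepsilon'}E_{F,\varepsilon'}$ follows from convexity together with $E_{F,1}=0$, and it gives continuity there. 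Finiteness on $(0,1]$ is supplied by part A. At $\varepsilon=0$, I take $\sigma_n\in U_{\varepsilon_n}(\rho)$ with $E_F(\sigma_n)\le E_{F,\varepsilon_n}(\rho)+1/n$. Then $\sigma_n\to\rho$, and lower semicontinuity of $E_F$ yields $\liminf E_{F,\varepsilon_n}(\rho)\ge E_F(\rho)$.

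For E, set $t=T(\rho,\sigma)$. The triangle inequality gives $E_{F,\varepsilon}(\sigma)\ge E_{F,\varepsilon+t}(\rho)$. Convexity in $\varepsilon$ on $[0,\varepsilon+t]$ then gives $E_{F,\varepsilon}(\rho)\le\frac{t}{\varepsilon+t}E_F(\rho)+\frac{\varepsilon}{\varepsilon+t}E_{F,\varepsilon+t}(\rho)$. Subtracting gives \eqref{scb}; if $\varepsilon+t>1$ one uses $E_{F,1}=0$ instead.

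\textbf{Part F and continuity.} The Lipschitz bound follows by applying E symmetrically. Continuity on all of $\St(\HH_{AB})$ needs more, since $E_F$ may be unbounded near a state. Given $\rho_n\to\rho$, I would use part A to pick $\sigma\in U_{\varepsilon/2}(\rho)$ with $E_F(\sigma)<\infty$. Applying E to the pair $\sigma$, $\rho_n$ with the radius reduced by $T(\rho,\sigma)$, together with the continuity in $\varepsilon$ from D, controls $\limsup$ and $\liminf$. Concretely, $|E_{F,\varepsilon}(\rho_n)-E_{F,\varepsilon}(\rho)|$ is sandwiched between $E_{F,\varepsilon\pm\delta_n}(\rho)$, where $\delta_n=T(\rho,\rho_n)\to0$. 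So continuity in $\rho$ reduces to the continuity in $\varepsilon$ from D, which needs finiteness at $\varepsilon/2>0$, and A gives that.

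\textbf{Parts A, G, H, and B (the main obstacle).} For A, suppose w.l.o.g. that the $A$-tail condition holds. Let $P$ be the spectral projector of $\rho_A$ onto its top $n$ eigenvalues, and set $\sigma=(P\otimes I)\rho(P\otimes I)/\Tr(P\otimes I)\rho$. The gentle measurement lemma gives $T(\rho,\sigma)\le\sqrt{1-\Tr P\rho_A}\le\varepsilon$. Since $\sigma_A$ is supported on an $n$-dimensional space, $E_F(\sigma)\le\ln n$. For G, $\Phi$ maps $U_\varepsilon(\rho)$ into $U_\varepsilon(\Phi(\rho))$ without increasing $E_F$. For H, the inequality $E_F(\Phi(\sigma))\le E_F(\sigma)$ with $\Phi(\sigma)\in U_\varepsilon(\rho)$ shows that restricting the infimum loses nothing; the measure versions follow from \eqref{EF-d+}.

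For B, "if" is immediate. For "only if" and the bound, let $\sigma\in U_\varepsilon(\rho)$. The triangle inequality gives $d(\sigma,\St_{\rm sep})\ge d(\rho,\St_{\rm sep})-\varepsilon$. I would then need $E_F(\sigma)\ge 2d(\sigma,\St_{\rm sep})^2$. For pure states this comes from Pinsker's inequality: $S(\sigma_A)=D(\sigma\|\Delta(\sigma))$, where $\Delta(\sigma)$ is the Schmidt-dephased separable state. Convexity of $x\mapsto x^2$ and of $d(\cdot,\St_{\rm sep})$, applied over an optimal ensemble, then extends the bound to mixed states.

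The main obstacle is "only if" when $d(\rho,\St_{\rm sep})=\varepsilon$ exactly, since the bound then reads $0\ge0$. Here one needs the attainability of the distance (Lemma~\ref{atd}) to produce a separable state in $U_\varepsilon(\rho)$.
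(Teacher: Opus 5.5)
Your proposal is correct. Parts A, C and D, and the convexity in F, match the paper's arguments, but you reach continuity in F, and claims G and H, by a more elementary route.

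\textbf{Where the routes differ.}
\begin{itemize}
\item The paper gets lower semicontinuity of $\rho\mapsto E_{F,\varepsilon}(\rho)$, and claims G and H, from the dual formula $E_{F,\varepsilon}(\rho)=\sup_{\lambda>0}\{E_F^\lambda(\rho)-\varepsilon/\lambda\}$ of Proposition~\ref{mm}B. It combines this with continuity and monotonicity of the Moreau--Yosida approximations proved in \cite{EF}, which is why it needs its Note on the order of proofs.
\item You argue from the definition. The inclusions $U_{\varepsilon-\delta_n}(\rho)\subseteq U_\varepsilon(\rho_n)\subseteq U_{\varepsilon+\delta_n}(\rho)$ give $E_{F,\varepsilon+\delta_n}(\rho)\le E_{F,\varepsilon}(\rho_n)\le E_{F,\varepsilon-\delta_n}(\rho)$. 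So continuity in $\rho$ follows from continuity in $\varepsilon$. The paper uses only the first inclusion, and only for upper semicontinuity.
\item For G and H, you note that $\Phi$ maps $U_\varepsilon(\rho)$ into $U_\varepsilon(\Phi(\rho))$ without increasing $E_F$. The measure versions of H can then be read off from (\ref{EF-d+}).
\item Your E recasts the paper's explicit mixture $\lambda\sigma_*+(1-\lambda)\rho$ as convexity in $\varepsilon$ on $[0,\varepsilon+t]$ together with $U_\varepsilon(\sigma)\subseteq U_{\varepsilon+t}(\rho)$.
\item Your B replaces the paper's chain ($E_F\ge E_R$ from Remark~\ref{imp}, then Pinsker) with the direct pure-state bound $S(\psi_A)=D(\psi\Vert\Delta(\psi))\ge 2d(\psi,\St^{AB}_{\rm sep})^2$, extended by the convex roof. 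Both routes need Lemma~\ref{atd} for the ``only if'' part.
\end{itemize}

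\textbf{What each route buys.} Yours makes Theorem~\ref{P-1} independent of Proposition~\ref{mm} and \cite{EF}. The paper's route showcases the transfer of properties between the two families, which it exploits later.

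\textbf{Minor points.}
\begin{itemize}
\item In F, the quantity that is sandwiched is $E_{F,\varepsilon}(\rho_n)$ itself, not $|E_{F,\varepsilon}(\rho_n)-E_{F,\varepsilon}(\rho)|$.
\item In F, the preliminary step via $\sigma\in U_{\varepsilon/2}(\rho)$ and claim E is superfluous.
\item In B, optimal discrete ensembles need not exist in infinite dimensions. Take an arbitrary countable pure-state ensemble, use $\sigma$-convexity of $d(\cdot,\St^{AB}_{\rm sep})$, and then take the infimum.
\end{itemize}
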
\medskip

\emph{Proof.} A)
Let \(P_n^X\) be the spectral projector of the state \(\rho_X\)
corresponding to its \(n\) largest eigenvalues, taking multiplicities
into account, \(X=A,B\). Consider the states
\[
\sigma_n^A=
\frac{
(P_n^A\otimes I_B)\shs\rho\,(P_n^A\otimes I_B)
}{
\operatorname{Tr}P_n^A\rho_A
},
\qquad
\sigma_n^B=
\frac{
(I_A\otimes P_n^B)\shs\rho\,(I_A\otimes P_n^B)
}{
\operatorname{Tr}P_n^B\rho_B
}.
\]
Then using Lemma 9.4.1 in \cite{Wilde} we obtain
\[
\frac12\|\sigma_n^X-\rho\|_1
\leq
\sqrt{\operatorname{Tr}(I_X-P_n^X)\rho_X}
=
\sqrt{\sum_{i>n}\lambda_i^X},\quad X=A,B.
\]
Therefore, if $\,\sum_{i>n}\lambda_i^X\leq\varepsilon^2$, where $X$ is either $A$ or $B$, then
$\,E_{F,\varepsilon}(\rho)
\leq E_F(\sigma_n^X)
\leq \ln n.$
 \smallskip

B) It is clear that $\,E_{F,\varepsilon}(\rho)=0\,$ if $\,U_\varepsilon(\rho)\cap\St_{\rm sep}^{AB}\neq\varnothing$. So, to prove this claim it suffices to
show that
\begin{equation}\label{rin}
d\!\left(\rho,\St_{\rm sep}^{AB}\right)>\varepsilon
\end{equation}
and to verify the validity of inequality \eqref{eq:EF-lower-bound} in the case $\,U_\varepsilon(\rho)\cap\St_{\rm sep}^{AB}=\varnothing$.\footnote{The assumption $\,U_\varepsilon(\rho)\cap \mathfrak{A}=\varnothing\,$ for a closed convex non-compact subset  $\mathfrak{A}\subset\St_{\rm sep}(\HH_{AB})\,$ with $\,\dim\HH_{AB}=+\infty\,$ does not imply $\,d\!\left(\rho,\mathfrak{A}\right)>\varepsilon$.}

By Lemma \ref{atd} in the Appendix, the distance from a state $\rho$ to the set $\St_{\rm sep}^{AB}$ is attained. Consequently, the assumption
$U_\varepsilon(\rho)\cap\St_{\rm sep}^{AB}=\varnothing$ implies (\ref{rin}).

For any $\sigma\in U_\varepsilon(\rho)$, the triangle inequality gives
\begin{align*}
 d(\sigma,\St_{\rm sep}^{AB})\,
 &=\inf_{\omega\in\St_{\rm sep}^{AB}}T(\sigma,\omega)\geq\inf_{\omega\in\St_{\rm sep}^{AB}}
       \big(T(\rho,\omega)-T(\rho,\sigma)\big)\geq d(\rho,\St_{\rm sep}^{AB})-\varepsilon.
\end{align*}
It follows from the quantum Pinsker inequality that
\begin{align*}
 E_F(\sigma)
 &\geq E_R(\sigma)
 =\inf_{\omega\in\St_{\rm sep}^{AB}}D(\sigma\Vert\omega)\geq2
       \inf_{\omega\in\St_{\rm sep}^{AB}}T(\sigma,\omega)^2\\
 &=2d\!\left(\sigma,\St_{\rm sep}^{AB}\right)^2\geq2\big(d\!\left(\rho,\St_{\rm sep}^{AB}\right)-\varepsilon\big)^2>0
\end{align*}
where the first inequality follows from Remark \ref{imp} and  $D(\cdot\|\cdot)$ is the quantum relative entropy \cite{H-SCI,Wilde}. Taking the infimum over all $\sigma\in U_\varepsilon(\rho)$ proves \eqref{eq:EF-lower-bound}. Since $d(\rho,\St_{\rm sep}^{AB})>\varepsilon$, the lower bound is strictly positive.\smallskip

C) To prove this claim it suffices to take an arbitrary separable state $\tau$ and consider the state $\sigma=(1-\varepsilon)\rho+\varepsilon \tau$, since
$$
T(\rho,\sigma)\leq \varepsilon\quad\textrm{ and }\quad E_F(\sigma)\leq (1-\varepsilon)E_F(\rho)+\varepsilon E_F(\tau)=(1-\varepsilon)E_F(\rho),
$$
by the convexity of $E_F$ and because $E_F(\tau)=0$. \smallskip

D) Fix a state $\rho$ and write $\,r_\rho(\varepsilon)\doteq E_{F,\varepsilon}(\rho)\,$.
The function $r_\rho$ is nonincreasing and $r_\rho(1)=0$ as the set $U_1(\rho)$ contains (all) separable states. It is also convex. Indeed, following the proof of Theorem 4 in \cite{MPB} assume that $\sigma_1$ and $\sigma_2$ are states such that
$\,T(\rho,\sigma_i)\leq\varepsilon_i,$ $i=1,2$,
and put
\[
\sigma_t=t\sigma_1+(1-t)\sigma_2,
\qquad 0\leq t\leq1.
\]
Joint convexity of the trace distance and convexity of $E_F$ imply
\[
T(\rho,\sigma_t)
\leq
t\varepsilon_1+(1-t)\varepsilon_2
\quad
\textrm{and}
\quad
E_F(\sigma_t)
\leq
tE_F(\sigma_1)+(1-t)E_F(\sigma_2).
\]
Taking infima over $\varepsilon_1$ and $\varepsilon_2$ gives
\[
r_\rho(t\varepsilon_1+(1-t)\varepsilon_2)
\leq
t r_\rho(\varepsilon_1)+(1-t)r_\rho(\varepsilon_2).
\]

Convexity and monotonicity of $r_\rho(\varepsilon)$ were established before. Since this function
is finite on $(0,1]$ by claim $A$, a standard property of finite convex functions on subsets of $\mathbb{R}^1$
implies continuity of $r_\rho(\varepsilon)$ on $(0,1)$.  At the right endpoint, monotonicity of $r_\rho(\varepsilon)$ gives
\[
0=r_\rho(1)
\leq
\liminf_{\varepsilon\to1^-}r_\rho(\varepsilon)\leq\limsup_{\varepsilon\to1^-}r_\rho(\varepsilon)\leq r_\rho(1)=0,
\]
where the last inequality follows from the convexity of $r_\rho(\varepsilon)$. It remains to consider $\varepsilon=0$.  It is clear that
$\,\sup_{\varepsilon\in(0,1]}r_\rho(\varepsilon)\leq r_\rho(0)\doteq E_F(\rho).$ Hence,
$$
r_\rho(\varepsilon)\nearrow C\leq E_F(\rho)\leq +\infty \quad\textup{as}\quad \varepsilon\searrow 0^+.
$$
Assume that $\,C<E_F(\rho)$. This implies that $C<+\infty$.  Let
$\varepsilon_n\to0^+$ as $n\to+\infty$.  Choose
states $\sigma_n$ such that
\[
T(\rho,\sigma_n)\leq\varepsilon_n
\quad\textrm{ and }\quad
E_F(\sigma_n)
\leq
r_\rho(\varepsilon_n)+\frac1n\,.
\]
Then $\,\sigma_n\rightarrow\rho\,$ as $\,n\to+\infty\,$ in trace norm. The lower semicontinuity of $E_F$ therefore gives
\[
E_F(\rho)
\leq
\liminf_{n\to+\infty}E_F(\sigma_n)
\leq
\liminf_{n\to+\infty}
\left(r_\rho(\varepsilon_n)+\frac1n\right).
\]
This contradicts our assumption.\smallskip

E) This claim is proved by a simple modification of the arguments used in \cite{MPB} to prove continuity of the function $E_{\varepsilon}^{(D)}$ in the finite-dimensional settings.

Denote $\,T(\rho,\sigma)\,$ by $\,\delta$. Let $\theta>0$ be arbitrary and $\sigma_*$ be a  state
in $U_\varepsilon(\sigma)$ such that $E_{F,\varepsilon}(\sigma)\geq E_F(\sigma_*)-\theta$. Take
$\,\lambda=\frac{\varepsilon}{\varepsilon+\delta}$
and consider the state
\[
\omega=\lambda\sigma_*+(1-\lambda)\rho.
\]
Then
$$
\|\omega-\rho\|_1=\lambda\|\sigma_*-\rho\|_1\leq
\lambda\|\sigma_*-\sigma\|_1+\lambda\|\sigma-\rho\|_1\leq 2\lambda(\varepsilon+\delta)=2\varepsilon
$$
by the triangle inequality. Therefore,
\begin{align*}
E_{F,\varepsilon}(\rho)
&\leq E_F(\omega)\\
&\leq \lambda E_F(\sigma_*)
 +(1-\lambda)E_F(\rho)\\
&\leq \lambda E_{F,\varepsilon}(\sigma)
 +\lambda\theta+(1-\lambda)E_F(\rho),
\end{align*}
where we used the convexity of \(E_F\) and the choice of \(\sigma_*\).

Thus, using the expression for \(\lambda\), we obtain
\[
E_{F,\varepsilon}(\rho)-E_{F,\varepsilon}(\sigma)
\leq\frac{\delta}{\varepsilon+\delta}\left(E_F(\rho)-E_{F,\varepsilon}(\sigma)\right)+\frac{\varepsilon\theta}{\varepsilon+\delta}.
\]
Since $\theta$ is arbitrary and \(E_{F,\varepsilon}(\sigma)\geq0\), this inequality implies (\ref{scb}).\smallskip

F) The lower semicontinuity of the finite function  $E_{F,\varepsilon}$ on $\St(\HH_{AB})$ follows from Proposition \ref{mm}B in Section 3.2 below and the
continuity of the function $E^{\lambda}_{F}$ \cite[Theorem 1]{EF}. Thus, to prove the continuity of this function on $\St(\HH_{AB})$ it suffices to come to a contradiction  assuming existence of a sequence $\{\rho_n\}\subset\St(\HH_{AB})$ converging to a state $\rho_0\in\St(\HH_{AB})$ such that
\begin{equation}\label{ouras}
\exists\lim_{n\to+\infty}E_{F,\varepsilon}(\rho_n)=a>E_{F,\varepsilon}(\rho_0).
\end{equation}

By claims A and C the function $\,\varepsilon\mapsto E_{F,\varepsilon}(\rho_0)\,$ is finite and continuous on $(0,1]$. So, there is $\delta\in(0,\varepsilon)$ such that  $a>E_{F,\varepsilon-\delta}(\rho_0)$. But it is clear that there exists
$n_\delta$ such that $\,U_{\varepsilon}(\rho_n)\supseteq U_{\varepsilon-\delta}(\rho_0)\,$ and, hence, $E_{F,\varepsilon}(\rho_n)\leq E_{F,\varepsilon-\delta}(\rho_0)$
for all $n\geq n_\delta$. This contradicts (\ref{ouras}).\smallskip

The convexity of the function $E_{F,\varepsilon}$ is proved by the obvious modifications of the arguments from the proof of Theorem 2 in \cite{MPB}.\smallskip

G) Since $E_F^{\lambda}(\Phi(\rho))\leq E_F^{\lambda}(\rho)$ for any state $\rho$ and any $\lambda>0$  by Theorem 1F in \cite{EF}, by using
Proposition \ref{mm}B in Section 3.2 below we obtain
\[
E_{F,\varepsilon}(\Phi(\rho))
=
\sup_{\lambda>0}
\left\{
E_F^\lambda(\Phi(\rho))-\frac{\varepsilon}{\lambda}
\right\}
\leq
\sup_{\lambda>0}
\left\{
E_F^\lambda(\rho)-\frac{\varepsilon}{\lambda}
\right\}.
\]

H) This claim follows directly from Theorem 1G in \cite{EF} and Proposition \ref{mm}B in Section 3.2 below. $\Box$\medskip

\textbf{Note:} In the proofs of claims F, G and H of  Theorem \ref{P-1} we have used Proposition \ref{mm}B in Section 3.2 below which is established by using claim D
of  Theorem \ref{P-1} (proved independently of claims  F, G and H). \smallskip

Theorem \ref{P-1}G and the LOCC-monotonicity of the EoF imply the following observation (which, in finite dimensions,  follows from Theorem 1 in \cite{MPB})\smallskip

\begin{corollary}\label{E-r} \emph{For every $\varepsilon\in[0,1]$ the function $E_{F,\varepsilon}$ does not increase under nonselective
LOCC-operations (cf.\cite{Vidal,P&V,Lami-new}).}
\end{corollary}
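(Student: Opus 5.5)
The plan is to apply Theorem \ref{P-1}G to an arbitrary nonselective LOCC operation $\Phi$, acting on $\St(\HH_{AB})$ as a channel $\Phi:\T(\HH_{AB})\to\T(\HH_{AB})$, or more generally into $\T(\HH_{A'B'})$. So it suffices to check the two hypotheses of claim G for every state $\sigma$. If the output system differs from the input one, I would first note that the argument of claim G (via Proposition \ref{mm}B and Theorem 1F in \cite{EF}) does not use the coincidence of input and output spaces. Alternatively, one can embed both systems into a common bipartite system $(A\oplus A')(B\oplus B')$ by local isometries. These leave $E_F$ and the trace distance unchanged, and hence leave $E_{F,\varepsilon}$ unchanged.

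The first hypothesis is $E_F(\Phi(\sigma))\leq E_F(\sigma)$. This is the LOCC-monotonicity of the EoF in general infinite-dimensional settings, which I take from \cite{EM,Lami-new} as quoted in Remark \ref{imp}. Those references give monotonicity under (generalized) LOCC for the continuous convex-roof form in (\ref{EF-d}). The sketch behind it: for a local instrument $\{\Phi_j\}$, the pure-state entropy of the reduced state is monotone on average, and convexity together with lower semicontinuity and the attainment of the second infimum in (\ref{EF-d+}) extend this to all states. Composing finitely (or, via lower semicontinuity, countably) many rounds then gives the claim for nonselective LOCC.

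The second hypothesis is $\|\Phi(\rho)-\Phi(\sigma)\|_1\leq\|\rho-\sigma\|_1$. This follows because every LOCC channel is completely positive and trace preserving, and the trace norm is contractive under positive trace-preserving maps on $\T(\HH)$. To see this, decompose $\rho-\sigma=P-Q$ with $P,Q\geq0$ and $\Tr P+\Tr Q=\|\rho-\sigma\|_1$, then apply the triangle inequality. Claim G then yields $E_{F,\varepsilon}(\Phi(\rho))\leq E_{F,\varepsilon}(\rho)$ for all $\varepsilon\in[0,1]$.

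The main obstacle is only bookkeeping. One point is ensuring that the precise class of "nonselective LOCC operations" (possibly infinitely many rounds or generalized LOCC) is covered by the monotonicity results of \cite{EM,Lami-new}. The other is handling a change of output system. Everything else is immediate.
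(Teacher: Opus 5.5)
Your proposal is correct and matches the paper's argument: the corollary is obtained by applying Theorem \ref{P-1}G to a nonselective LOCC channel $\Phi$, using the LOCC-monotonicity of $E_F$ together with the trace-norm contractivity of channels. Your extra bookkeeping about a change of output system is harmless; the paper does not address it explicitly. In fact claim G follows directly from the definition of $E_{F,\varepsilon}$: if $T(\rho,\sigma)\leq\varepsilon$, then $T(\Phi(\rho),\Phi(\sigma))\leq\varepsilon$ and $E_F(\Phi(\sigma))\leq E_F(\sigma)$, and this argument does not need the input and output spaces to coincide.
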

\medskip

Note that $E_{F,\varepsilon}$ may increase under selective LOCC-operations. This can be shown using the general example in \cite[Proposition 2]{MPB}.  \smallskip

Theorem \ref{P-1}H allows us to independently prove a $E_F$-version of Proposition 1 in \cite{MPB}.\smallskip
\begin{corollary}\label{Edc}\emph{ For any $\varepsilon\in[0,1]$  the first (resp. the second, the third) infimum in (\ref{SEF-def}) can be taken over the set
$\,\St(\HH^\rho_A\otimes\HH^\rho_B)\,$ (resp. $\,\P^{\shs\rm d}_{\rho,\varepsilon}(\HH^\rho_A\otimes\HH^\rho_B),\, \P_{\rho,\varepsilon}(\HH^\rho_A\otimes\HH^\rho_B)$), where $\,\HH^\rho_X\doteq\supp\rho_X$, $X=A,B$.}
\end{corollary}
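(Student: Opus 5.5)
We will deduce the statement from Theorem \ref{P-1}H, applied to the map obtained by compressing to the supports of the marginals. Let $P_X$ be the projector onto $\HH^\rho_X=\supp\rho_X$, $X=A,B$, and put $P\doteq P_A\otimes P_B$. Fix a state $\tau_0$ in $\St(\HH^\rho_A\otimes\HH^\rho_B)$ that is a product state, say $\tau_0=\tau_A\otimes\tau_B$ with $\tau_X$ pure states supported in $\HH^\rho_X$. Define
$$
\Phi(\sigma)\doteq P\sigma P+\Tr\big[(I_{AB}-P)\sigma\big]\,\tau_0 .
$$
This is a quantum channel, and its range lies in $\St(\HH^\rho_A\otimes\HH^\rho_B)$.

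\emph{The three hypotheses of Theorem \ref{P-1}H.}

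\begin{enumerate}
\item \emph{$\Phi(\rho)=\rho$.} Since $\supp\rho\subseteq\HH^\rho_A\otimes\HH^\rho_B$, we have $P\rho P=\rho$ and $\Tr(I-P)\rho=0$.

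\item \emph{Contractivity of the trace norm.} This holds for every positive trace-preserving map, hence for $\Phi$.

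\item \emph{$E_F(\Phi(\sigma))\le E_F(\sigma)$.} This is the main point. We realise $\Phi$ as a nonselective LOCC operation. Alice measures the two-outcome instrument $\{P_A,\,I_A-P_A\}$ and Bob measures $\{P_B,\,I_B-P_B\}$. If both obtain the first outcome, they keep the post-measurement state. Otherwise they discard their systems and prepare $\tau_A\otimes\tau_B$ locally.

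The resulting channel is
$$
\sigma\mapsto P\sigma P+\Tr[(I-P)\sigma]\,\tau_0 .
$$
This works because the total probability of the "otherwise" branches is $\Tr[(I-P)\sigma]$, and the non-diagonal cross terms vanish after the measurement.

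By the LOCC-monotonicity of $E_F$ (the fact already used for Corollary \ref{E-r}, valid in infinite dimensions), we get $E_F(\Phi(\sigma))\le E_F(\sigma)$.

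If one prefers to avoid citing LOCC monotonicity for this instrument, one can argue directly with (\ref{EF-d+}). Take an optimal measure $\mu$ for $\sigma$ and push it forward under the branch maps:
\begin{itemize}
\item the branch $\varrho\mapsto P\varrho P/\Tr P\varrho$ is a local projection, which does not increase $S(\varrho_A)$ on average;
\item the remaining branches map to the product state $\tau_0$, whose entropy of the reduced state is zero.
\end{itemize}
The monotonicity of entanglement entropy of pure states under local projective measurements on average then gives the inequality.
\end{enumerate}

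\emph{Applying Theorem \ref{P-1}H.} With the hypotheses verified, Theorem \ref{P-1}H shows that the first infimum in (\ref{SEF-def}) may be restricted to $\Phi(\St(\HH_{AB}))\cap U_\varepsilon(\rho)$, which is contained in $\St(\HH^\rho_A\otimes\HH^\rho_B)$. Restricting to the larger set $\St(\HH^\rho_A\otimes\HH^\rho_B)\cap U_\varepsilon(\rho)$ can only increase the infimum. On the other hand, it cannot make the infimum exceed the unrestricted one, since we remain inside $U_\varepsilon(\rho)$. Hence all three infima coincide.

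For the second and third infima, Theorem \ref{P-1}H restricts them to measures whose barycenter lies in $\St(\HH^\rho_A\otimes\HH^\rho_B)$. Such a measure on pure states is necessarily supported on pure states of $\HH^\rho_A\otimes\HH^\rho_B$: the barycenter condition forces $\Tr(I-P)\varrho=0$ for $\mu$-almost all $\varrho$. Thus these restricted sets are exactly $\P_{\rho,\varepsilon}(\HH^\rho_A\otimes\HH^\rho_B)$ and its discrete part, with the same sandwich argument as for the first infimum.

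The only step requiring care is the monotonicity $E_F(\Phi(\sigma))\le E_F(\sigma)$ in infinite dimensions. We handle it via the LOCC-monotonicity already invoked for Corollary \ref{E-r}.
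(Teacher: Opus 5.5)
Your proposal is correct and follows the paper's route: the paper also derives the corollary from Theorem \ref{P-1}H with exactly this channel $\Phi(\sigma)=P\sigma P+\Tr[(I_{AB}-P)\sigma]\,\tau$, where $\tau$ is any separable state in $\St(\HH^\rho_A\otimes\HH^\rho_B)$. Your pure product choice of $\tau$ is a special case, and it is what makes your direct pushforward-of-measures check of $E_F(\Phi(\sigma))\le E_F(\sigma)$ work.

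One wording slip in your final sandwich: the infimum over $\St(\HH^\rho_A\otimes\HH^\rho_B)\cap U_\varepsilon(\rho)$ is \emph{at least} the unrestricted one because this set lies inside $U_\varepsilon(\rho)$, and \emph{at most} the unrestricted one because it contains $\Phi(\St(\HH_{AB}))\cap U_\varepsilon(\rho)$; you attached these two justifications to the wrong inequalities, though the conclusion is unaffected.
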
\smallskip

To derive Corollary \ref{Edc} from Theorem \ref{P-1}G one should take the channel
\begin{equation}\label{ch-form}
\Phi(\sigma)=P_A\otimes P_B\cdot \sigma\cdot P_A\otimes P_B + [\Tr(I_{AB}-P_A\otimes P_B)\sigma]\shs \tau,\quad \sigma\in\St(\HH_{AB}),
\end{equation}
where $P_X$ is the projector onto $\HH^\rho_X$, $X=A,B$, and $\tau$ is an arbitrary separable state in $\,\St(\HH^\rho_A\otimes\HH^\rho_B)$.
\medskip

\begin{remark}\label{Edr}
Corollary \ref{Edc} shows the invariance of the definition of the function $E_{F,\varepsilon}$  w.r.t. to the embeddings $\,\HH_A\subset\HH_{A'}$ and $\,\HH_B\subset\HH_{B'}$:
for any $\rho\in\St(\HH_{AB})$ we have
$$
\inf_{\sigma\in\St(\HH_{AB})}\!\left\{E_F(\sigma)\,|\, T(\rho,\sigma)\leq \varepsilon\right\}=\inf_{\sigma\in\St(\HH_{A'B'})}\!\left\{E_F(\sigma)\,|\, T(\rho,\sigma)\leq \varepsilon\right\}\quad \forall\varepsilon\in(0,1).
$$
\end{remark}\medskip

\subsection{Relation to the Moreau--Yosida approximation of the EoF}

The proof of several claims of  Theorem \ref{P-1} and the important results below  are based of the relations between  the family  $\{E_{F,\varepsilon}\}_{\varepsilon\in[0,1]}$ of  $\varepsilon$-smoothed EoF
and the family $\{E_F^\lambda\}_{\lambda>0}$ of Moreau--Yosida approximations of the EoF  defined as
\begin{equation}\label{EFA+}
 E_F^\lambda(\rho)=\inf_{\sigma\in\St(\HH_{AB})}\left\{E_F(\sigma)+\frac{1}{\lambda}\,T(\rho,\sigma)\right\},
\end{equation}
which are  introduced and explored in \cite{EF}. These relations are described in the following\medskip

\begin{proposition}\label{mm} \emph{Let $A$ and $B$ be quantum systems of any dimensions.}\medskip

A) \emph{For every state $\rho$ in $\,\St(\HH_{AB})$ and every $\lambda>0$ the following relation holds
\begin{equation}
E_F^\lambda(\rho)
=
\inf_{0\leq\varepsilon\leq1}
\left\{
E_{F,\varepsilon}(\rho)+\frac{\varepsilon}{\lambda}
\right\}.
\label{eq:forward}
\end{equation}
The infimum in (\ref{eq:forward}) is always attained  (so, it can be replaced by minimum).} \medskip

B) \emph{For every state $\rho$ in $\,\St(\HH_{AB})$ and every $\varepsilon\in[0,1]\,$ the following relation holds
\begin{equation}
E_{F,\varepsilon}(\rho)
=
\sup_{\lambda>0}
\left\{
E_F^\lambda(\rho)-\frac{\varepsilon}{\lambda}
\right\}.
\label{eq:backward}
\end{equation}
The supremum  in (\ref{eq:backward}) is not attained, in general, even in the case $\HH_A=\HH_B=\mathbb{C}^2$.} \medskip
\end{proposition}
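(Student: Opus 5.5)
Part A is a direct rearrangement of infima, and part B follows from it by Fenchel--Moreau duality in the variable $\varepsilon$, using Theorem \ref{P-1}D.

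For A, fix $\rho$ and $\lambda>0$. Every state $\sigma$ satisfies $T(\rho,\sigma)\in[0,1]$. So the infimum in (\ref{EFA+}) can be split by first fixing $\varepsilon\in[0,1]$ and taking $\sigma\in U_\varepsilon(\rho)$, then optimizing over $\varepsilon$. For $\sigma\in U_\varepsilon(\rho)$ we have $E_F(\sigma)+\frac{1}{\lambda}T(\rho,\sigma)\le E_F(\sigma)+\varepsilon/\lambda$. Taking the infimum over such $\sigma$ gives $E_F^\lambda(\rho)\le E_{F,\varepsilon}(\rho)+\varepsilon/\lambda$ for every $\varepsilon$.

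Conversely, any $\sigma$ lies in $U_{\varepsilon}(\rho)$ with $\varepsilon=T(\rho,\sigma)$. Hence
\[
E_F(\sigma)+\tfrac1\lambda T(\rho,\sigma)\ge E_{F,T(\rho,\sigma)}(\rho)+\tfrac{T(\rho,\sigma)}{\lambda}\ge\inf_\varepsilon\{E_{F,\varepsilon}(\rho)+\varepsilon/\lambda\}.
\]
This proves (\ref{eq:forward}).

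Attainment follows from Theorem \ref{P-1}D. The map $\varepsilon\mapsto E_{F,\varepsilon}(\rho)+\varepsilon/\lambda$ is continuous on $[0,1]$ as an extended-valued function; at $\varepsilon=0$ its value may be $+\infty$, but it is approached monotonically there. Since $[0,1]$ is compact and the function is lower semicontinuous, it attains its minimum. The minimum is finite, because the value at $\varepsilon=1$ is $1/\lambda$.

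For B, write $r(\varepsilon)=E_{F,\varepsilon}(\rho)$ on $[0,1]$ and extend it by $+\infty$ outside $[0,1]$. By Theorem \ref{P-1}D, $r$ is proper, convex and lower semicontinuous on $\mathbb{R}$: continuity on $[0,1]$ together with the value $+\infty$ outside a closed interval gives lower semicontinuity. Set $s=1/\lambda$. Part A says $E_F^\lambda(\rho)=-r^*(-s)$, where $r^*$ is the Legendre--Fenchel conjugate. The Fenchel--Moreau theorem then gives
\[
r(\varepsilon)=\sup_{s\in\mathbb{R}}\{-r^*(-s)-s\varepsilon\}.
\]

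It remains to restrict this supremum to $s>0$, which is the main technical point. Since $r$ is non-increasing, for $s\le0$ and $\varepsilon\in[0,1]$ we have $\inf_t\{r(t)+st\}\le r(1)+s=s$. Hence $-r^*(-s)-s\varepsilon\le s(1-\varepsilon)\le0\le r(\varepsilon)$. Moreover, this value is approached by small positive $s$, because $s\mapsto -r^*(-s)$ is concave and finite on $\mathbb{R}$, hence continuous. So restricting to $s>0$ does not lower the supremum, which gives (\ref{eq:backward}). The case $\varepsilon=0$ with $r(0)=+\infty$ is also covered, since Fenchel--Moreau holds for extended values of a closed convex function.

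For the non-attainment claim I would take a two-qubit example with $\varepsilon=0$ and $E_F(\rho)>0$, for instance a pure entangled state. Attainment at some $\lambda$ would mean $E_F^\lambda(\rho)=E_F(\rho)$, i.e. the infimum in (\ref{EFA+}) is attained at $\sigma=\rho$. I would rule this out by exhibiting, for each $\lambda$, a perturbation $\sigma=(1-t)\rho+t\tau$ whose slope beats $1/\lambda$. For this I would use the fact that $r(\varepsilon)$ has infinite slope at $0$ for pure states: the entropy of the marginal has unbounded derivative along such mixtures with a nearby state. Equivalently, the supremum at $\varepsilon=0$ is attained only if $r$ has a finite right derivative at $0$, and for a pure state $r'(0^+)=-\infty$, so the supremum is not attained.
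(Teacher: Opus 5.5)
Your proof of A and of the identity \eqref{eq:backward} is correct and follows the paper's route. You split the infimum in \eqref{EFA+} by the value of $T(\rho,\sigma)$, apply Fenchel--Moreau to $r_\rho$ extended by $+\infty$, read off $E_F^\lambda(\rho)=-r_\rho^*(-1/\lambda)$ from A, and discard nonpositive $s$ using monotonicity, $r_\rho(1)=0$ and a limit at $s=0$. You use continuity of the finite concave function $s\mapsto -r_\rho^*(-s)$ where the paper uses the sandwich $p\le r_\rho^*(p)\le 0$; the difference is immaterial.

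The gap is the non-attainment claim: your example is one where the supremum \emph{is} attained. You correctly reduce attainment at $\varepsilon=0$ to $r_\psi'(0^+)>-\infty$, but for an entangled pure two-qubit state $\psi$ this right derivative is finite. Such a $\psi$ has full Schmidt rank, so $\chi\mapsto S(\chi_A)$ is smooth near $\psi$ on the unit sphere. Combine its second-order Taylor expansion near $\psi$ with $0\le S\le\ln 2$ away from $\psi$. With $\Pi\doteq|\psi\rangle\langle\psi|$, some vector $G$ and some constant $K$, this gives
\[
S(\chi_A)\ \geq\ S(\psi_A)+\mathrm{Re}\,\langle G|(I-\Pi)|\chi\rangle\langle\chi|\psi\rangle-K\langle\chi|(I-\Pi)|\chi\rangle\qquad\text{for all unit vectors }\chi .
\]
Both correction terms are linear in $|\chi\rangle\langle\chi|$. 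Average over any pure-state decomposition of $\sigma$, and use $\|(I-\Pi)\sigma|\psi\rangle\|=\|(I-\Pi)(\sigma-\Pi)|\psi\rangle\|\le 2T(\psi,\sigma)$ and $\operatorname{Tr}(I-\Pi)\sigma=\operatorname{Tr}\Pi(\Pi-\sigma)\le T(\psi,\sigma)$. This yields $E_F(\sigma)\ge E_F(\psi)-K'T(\psi,\sigma)$ with $K'=2\|G\|+K$. Hence $r_\psi(t)\ge r_\psi(0)-K't$, so $E_F^\lambda(\psi)=E_F(\psi)$ for every $\lambda\le 1/K'$; for $\Omega_2$ one even has $G=0$, since $S(\chi_A)$ is maximal there. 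The infinite slope of $h$ at $0$ that you invoke describes entropy \emph{increasing} away from a rank-deficient marginal. It gives no super-linear decrease of $E_F$ from a state with full-rank marginals, and your mixtures $(1-t)\psi+t\tau$ are covered by the linear lower bound just proved.

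What you need is an $\varepsilon$ at which $\partial r_\rho(\varepsilon)$ contains no strictly negative number. By the Fenchel--Young equality, the supremum in \eqref{eq:backward} is attained at $\lambda$ iff $-1/\lambda\in\partial r_\rho(\varepsilon)$. The paper takes $\varepsilon$ in the interior of the zero set of $r_\rho$, where $\partial r_\rho(\varepsilon)=\{0\}$. Concretely, $T(\Omega_2,\tau_2)=\frac12$ with $\tau_2$ separable gives $E_{F,3/4}(\Omega_2)=0$. On the other hand, $E_F^\lambda(\Omega_2)-\frac{3}{4\lambda}\le\frac{1}{2\lambda}-\frac{3}{4\lambda}<0$ for every $\lambda>0$, so the value $0$ is approached only as $\lambda\to+\infty$.
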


\emph{Proof.} Note first that the below proof exploits only those claims of Theorem \ref{P-1} in the proofs of which the Proposition \ref{mm} was not used. \smallskip

A) For a state $\sigma$, put
$\varepsilon=T(\rho,\sigma)$.  By the definition of the smoothed EoF,
\[
E_{F,\varepsilon}(\rho)\leq E_F(\sigma),
\]
and hence
\[
E_{F,\varepsilon}(\rho)+\frac{\varepsilon}{\lambda}
\leq
E_F(\sigma)+\frac{1}{\lambda}\,T(\rho,\sigma).
\]
Taking the infimum over all states $\sigma$  gives
\[
\inf_{\varepsilon\in[0,1]}
\left\{
E_{F,\varepsilon}(\rho)+\frac{\varepsilon}{\lambda}
\right\}\leq
E_F^\lambda(\rho).
\]
Conversely, if $T(\rho,\sigma)\leq\varepsilon$, then
\[
E_F^\lambda(\rho)
\leq
E_F(\sigma)+\frac{1}{\lambda}\,T(\rho,\sigma)
\leq
E_F(\sigma)+\frac{\varepsilon}{\lambda}.
\]
Taking the infimum over all states in the $\varepsilon$-ball $U_\varepsilon(\rho)$ and then
over $\varepsilon$ proves \eqref{eq:forward}.\smallskip

The last claim of A follows from continuity of the function  $\,\varepsilon \mapsto E_{F,\varepsilon}(\rho)+\frac{\varepsilon}{\lambda}\,$
for any given state $\rho$, which follows from Theorem \ref{P-1}D.
 \medskip

B) Fix a bipartite state $\rho$ and define $\,r_{\rho}(\varepsilon)\doteq E_{F,\varepsilon}(\rho)$, $\,0\leq \varepsilon\leq 1$.
Extend the function $r_{\rho}$ to the whole real line by setting
\[
r_{\rho}(\varepsilon)=+\infty
\qquad\text{for }\varepsilon\notin[0,1].
\]
By Theorem \ref{P-1}D this extended function  $r_{\rho}(\varepsilon)$  is proper, convex and lower
semicontinuous. Note also that $r_\rho$ is a finite nonincreasing function on $(0,1]$ and that $r_{\rho}(1)=0$.

The Fenchel conjugate of the function $r_{\rho}$ is
\begin{equation}
r_{\rho}^*(p)
=
\sup_{t\in\mathbb R}\{pt-r_{\rho}(t)\}
=
\sup_{0\leq t\leq 1}\{pt-r_{\rho}(t)\}.
\label{eq:fenchel-conjugate-r}
\end{equation}
By the Fenchel--Moreau theorem,
\begin{equation}
r_{\rho}(\varepsilon)
=
r_{\rho}^{**}(\varepsilon)
=
\sup_{p\in\mathbb R}\{p\varepsilon-r_{\rho}^*(p)\}.
\label{eq:fenchel-moreau-r}
\end{equation}

We want to show that the supremum in
\eqref{eq:fenchel-moreau-r} may be restricted to $p\leq 0$.
For $p>0$, the monotonicity of $r_{\rho}$ gives $r_{\rho}(t)\geq r_{\rho}(1)$ for every
$t\in[0,1]$. Hence
\[
pt-r_{\rho}(t)\leq p-r_{\rho}(1),
\]
and equality holds at $t=1$. Thus
\begin{equation}
r_{\rho}^*(p)=p-r_{\rho}(1),\qquad p>0.
\label{eq:conjugate-positive-p}
\end{equation}
Consequently, for $\varepsilon\in[0,1)$,
\[
p\varepsilon-r_{\rho}^*(p)
=
p(\varepsilon-1)+r_{\rho}(1)
< r_{\rho}(1)\leq r_{\rho}(\varepsilon),
\]
while for $\,\varepsilon=1\,$ we have $\,p\varepsilon-r_{\rho}^*(p)=p-r_{\rho}^*(p)\leq r_{\rho}(1)=0$.

Therefore, it follows form (\ref{eq:fenchel-moreau-r}) that
\begin{equation}
r_{\rho}(\varepsilon)
\geq
\sup_{p\leq0}\{p\varepsilon-r_{\rho}^*(p)\},\quad \forall\varepsilon\in[0,1],
\label{eq:negative-halfline-including-zero}
\end{equation}
and an equality holds in (\ref{eq:negative-halfline-including-zero}) for $\,\varepsilon\neq1$. In fact, it holds
for $\,\varepsilon=1\,$ as well, because
\[
-r_{\rho}^*(0)=\inf_{0\leq t\leq1}r_{\rho}(t)=r_{\rho}(1)=0.
\]

Since $\,p=p+r_{\rho}(1)\leq r_{\rho}^*(p)\leq 0\,$  for $\,p<0\,$ we have
\[
\lim_{p\to0^-}r_{\rho}^*(p)=r_{\rho}^*(0)=-r_{\rho}(1)=0.
\]
It follows that the value of $\,p\varepsilon-r_{\rho}^*(p)\,$ corresponding to $p=0$ can be obtained as a
limit along negative values of $p$. Thus, we conclude that
\begin{equation}
r_{\rho}(\varepsilon)
=
\sup_{p<0}\{p\varepsilon-r_{\rho}^*(p)\},
\qquad 0\leq\varepsilon\leq1.
\label{eq:negative-halfline}
\end{equation}

By  part A of the proposition  we have
\begin{equation}
E_F^\lambda(\rho)
=
\inf_{0\leq t\leq1}
\left\{
r_{\rho}(t)+\frac{t}{\lambda}
\right\}.
\label{eq:direct-transform-r}
\end{equation}
Using the definition of the Fenchel conjugate, we obtain
\begin{equation}
r_{\rho}^*\left(-\frac1\lambda\right)
=\sup_{0\leq t\leq1}
\left\{-\frac{t}{\lambda}-r_{\rho}(t)\right\}
=-\inf_{0\leq t\leq1}
\left\{r_{\rho}(t)+\frac{t}{\lambda}\right\}
=-E_F^\lambda(\rho).
\label{eq:conjugate-moreau}
\end{equation}

Now substitute $p=-1/\lambda$ into \eqref{eq:negative-halfline} using (\ref{eq:conjugate-moreau}).
As $p$ ranges over $\,(-\infty,0)$, the parameter
$\,\lambda=-1/p\,$ ranges over $\,(0,+\infty)$. Therefore,
\begin{equation}
r_{\rho}(\varepsilon)=\sup_{p<0}\{p\varepsilon-r_{\rho}^*(p)\}
=\sup_{\lambda>0}
\left\{
-\frac{\varepsilon}{\lambda}
-r_{\rho}^*\left(-\frac1\lambda\right)
\right\}
=\sup_{\lambda>0}
\left\{
E_F^\lambda(\rho)-\frac{\varepsilon}{\lambda}
\right\}.
\label{eq:inverse-transform}
\end{equation}
Recalling that $r_{\rho}(\varepsilon)=E_{F,\varepsilon}(\rho)$, we arrive at
the required  formula (\ref{eq:backward}).
\medskip

To prove the last claim of B consider the following example involving a
two-qubit maximally entangled state.

Let
\[
|\Omega_2\rangle
=
\frac{1}{\sqrt{2}}(|00\rangle+|11\rangle),
\qquad
\Omega_2=|\Omega_2\rangle\langle\Omega_2|,
\]
and consider the separable state
\[
\tau_2
=
\frac12|00\rangle\langle00|
+
\frac12|11\rangle\langle11|.
\]
Since $\,\Omega_2-\tau_2
=
\frac12\left(
|00\rangle\langle11|
+
|11\rangle\langle00|
\right)$, we have $\,T(\Omega_2,\tau_2)
=
\frac12\|\Omega_2-\tau_2\|_1
=
\frac12$. Consequently, $\,E_{F,\varepsilon}(\Omega_2)=0\,$
for every $\,\varepsilon\geq\frac12\,$
as $\,E_F(\tau_2)=0$. In particular,
\begin{equation}
E_{F,3/4}(\Omega_2)=0.
\label{eq:bell-smoothing-three-quarters}
\end{equation}

On the other hand, the definition of the Moreau--Yosida
approximation gives, for every $\lambda>0$,
\[
\begin{aligned}
E_F^\lambda(\Omega_2)
&=
\inf_{\sigma}
\left\{
E_F(\sigma)+\frac{T(\Omega_2,\sigma)}{\lambda}
\right\}\leq
E_F(\tau_2)+\frac{T(\Omega_2,\tau_2)}{\lambda}=\frac{1}{2\lambda}.
\end{aligned}
\]
It follows that
\[
E_F^\lambda(\Omega_2)-\frac{3}{4\lambda}
\leq
-\frac{1}{4\lambda}<0
\]
for every finite $\lambda>0$. At the same time  (\ref{eq:backward}) and
\eqref{eq:bell-smoothing-three-quarters} imply
\[
\sup_{\lambda>0}
\left\{
E_F^\lambda(\Omega_2)-\frac{3}{4\lambda}
\right\}
=E_{F,3/4}(\Omega_2)=0,
\]
while the expression inside the supremum is strictly negative
for every $\lambda>0$. Thus the supremum is not attained (it is achieved asymptotically as $\lambda\to +\infty$).

We have therefore established that the supremum in
(\ref{eq:backward}) cannot, in general, be
replaced by a maximum.   $\Box$

\subsection{Variational expression for the function $E_{F,\varepsilon}$}

The well known variational expression for the EoF (cf.\cite{Ulm,H-SCI,H&Sh-5}) and Proposition 1 in \cite{EF} state that
\begin{equation}\label{EV}
   E_F(\rho)=\sup_{\Lambda\in\mathfrak{D}(+\infty)}\Tr \Lambda\rho\quad \textrm{ and }\quad   E_F^{\lambda}(\rho)=\sup_{\Lambda\in\mathfrak{D}\!\left(\frac{1}{\lambda}\right)}\Tr \Lambda\rho,
\end{equation}
where
$$
\mathfrak{D}(d)\doteq\left\{\Lambda \in\B_{\rm sa}(\HH_{AB})\,\left|\, D(\Lambda)\leq d,\;\langle\psi|\Lambda|\psi\rangle
        \leq S(\psi_A)\;\, \forall\psi\in\HH^1_{AB}\right.\right\},\quad d\leq+\infty
$$
Here $\HH^1_{AB}$ is the unit sphere in $\HH_{AB}$ and  $D(\Lambda)$ is the diameter of the spectrum $\mathrm{Sp}(\Lambda)$ of an operator $\Lambda\in\B_{\rm sa}(\HH_{AB})$:
\begin{equation}\label{D-def}
D(\Lambda)=\max \mathrm{Sp}(\Lambda)-\min \mathrm{Sp}(\Lambda).
\end{equation}
It is easy to see that the condition $\,\langle\psi|\Lambda|\psi\rangle
        \leq S(\psi_A),\, \forall\psi\in\HH^1_{AB},$ used in the definition of the set $\mathfrak{D}(d)$ is equivalent to the following one
$$
\Tr\Lambda\sigma\leq E_F(\sigma)\quad \forall\sigma\in\St(\HH_{AB}).
$$
So, the set $\mathfrak{D}(d)$ can be equivalently defined as
$$
\mathfrak{D}(d)\doteq\left\{\Lambda \in\B_{\rm sa}(\HH_{AB})\,\left|\, D(\Lambda)\leq d,\; \Tr\Lambda\sigma\leq E_F(\sigma)\;\; \forall\sigma\in\St(\HH_{AB})\right.\right\}.
$$

Using the  second expression in (\ref{EV}) and the relations between  $E_{F,\varepsilon}$ and $E^{\lambda}_F$  stated in Proposition \ref{mm}  one can obtain the following\smallskip\newpage
\begin{proposition}\label{VEP}
\emph{Let $A$ and $B$ be quantum systems of any dimensions. For an arbitrary  state $\rho\in\St(\HH_{AB})$ and  $\varepsilon\in[0,1]$ the following expression holds
\begin{equation}\label{VE}
E_{F,\varepsilon}(\rho)= \sup_{\Lambda\in \mathfrak{F}(\varepsilon)}\Tr \Lambda\rho,
\end{equation}
where}
$$
\mathfrak{F}(\varepsilon)\doteq\left\{\Lambda \in\B_{\rm sa}(\HH_{AB})\,\left|\, \exists\lambda>0\textit{ s.t. }  D(\Lambda)\leq \frac{1}{\lambda},\; \langle\psi|\Lambda|\psi\rangle
        \leq S(\psi_A)-\frac{\varepsilon}{\lambda}\;\; \forall\psi\in\HH^1_{AB}\right.\right\}.
$$
\end{proposition}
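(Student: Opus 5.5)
The plan is to combine Proposition \ref{mm}B with the second expression in (\ref{EV}), and to absorb the shift $\varepsilon/\lambda$ into the operator $\Lambda$.

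By (\ref{eq:backward}) and (\ref{EV}) we have
$$
E_{F,\varepsilon}(\rho)=\sup_{\lambda>0}\Big\{\sup_{\Lambda\in\mathfrak{D}(1/\lambda)}\Tr\Lambda\rho-\frac{\varepsilon}{\lambda}\Big\}
=\sup_{\lambda>0}\,\sup_{\Lambda\in\mathfrak{D}(1/\lambda)}\Tr\Big(\Lambda-\frac{\varepsilon}{\lambda}I_{AB}\Big)\rho ,
$$
where we used $\Tr\rho=1$.

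Next, put $\Lambda'=\Lambda-\frac{\varepsilon}{\lambda}I_{AB}$. Shifting by a multiple of the identity does not change the spectrum diameter, so $D(\Lambda')=D(\Lambda)\leq 1/\lambda$. The constraint $\langle\psi|\Lambda|\psi\rangle\leq S(\psi_A)$ for all unit vectors $\psi$ is equivalent to $\langle\psi|\Lambda'|\psi\rangle\leq S(\psi_A)-\varepsilon/\lambda$. Hence the map $\Lambda\mapsto\Lambda'$ is a bijection from $\mathfrak{D}(1/\lambda)$ onto
$$
\mathfrak{F}_\lambda(\varepsilon)\doteq\Big\{\Lambda'\in\B_{\rm sa}(\HH_{AB})\,\Big|\,D(\Lambda')\leq\tfrac1\lambda,\ \langle\psi|\Lambda'|\psi\rangle\leq S(\psi_A)-\tfrac{\varepsilon}{\lambda}\ \ \forall\psi\in\HH^1_{AB}\Big\}.
$$
By definition, $\mathfrak{F}(\varepsilon)=\bigcup_{\lambda>0}\mathfrak{F}_\lambda(\varepsilon)$. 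The double supremum therefore collapses to $\sup_{\Lambda'\in\mathfrak{F}(\varepsilon)}\Tr\Lambda'\rho$, which is (\ref{VE}).

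The main point needing care is that the second expression in (\ref{EV}) (Proposition 1 in \cite{EF}) holds for all $\lambda>0$ and all states, including those with $E_F(\rho)=+\infty$, in the infinite-dimensional setting. We take this from \cite{EF}.

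A further check concerns emptiness of the sets. Nonemptiness of $\mathfrak{F}_\lambda(\varepsilon)$ is automatic: $\Lambda=0$ lies in $\mathfrak{D}(1/\lambda)$, so the corresponding $\Lambda'=-\frac{\varepsilon}{\lambda}I_{AB}$ lies in $\mathfrak{F}_\lambda(\varepsilon)$. Thus no supremum is taken over an empty set. Since the argument is an identity between suprema, the non-attainment in Proposition \ref{mm}B causes no problem.

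For $\varepsilon=0$ the sets $\mathfrak{F}_\lambda(0)=\mathfrak{D}(1/\lambda)$ exhaust $\mathfrak{D}(+\infty)$ as $\lambda\to0^+$. This is consistent with the first formula in (\ref{EV}).
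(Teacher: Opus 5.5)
Your proposal is correct and follows the paper's proof essentially verbatim: combine (\ref{eq:backward}) with the second formula in (\ref{EV}), use $\Tr\rho=1$ to move $\varepsilon/\lambda$ inside as $-\frac{\varepsilon}{\lambda}I_{AB}$, and use the shift-invariance of $D$ so that the double supremum collapses to the supremum over $\mathfrak{F}(\varepsilon)$. The only cosmetic difference is that the paper handles $\varepsilon=0$ via Remark \ref{D-r}, whereas your argument covers it directly, since every bounded $\Lambda$ has finite spectral diameter and hence $\bigcup_{\lambda>0}\mathfrak{D}(1/\lambda)=\mathfrak{D}(+\infty)$.
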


\begin{remark}\label{D-r}
The set $\mathfrak{F}(\varepsilon)$ can be equivalently defined as
$$
\mathfrak{F}(\varepsilon)\doteq\left\{\Lambda \in\B_{\rm sa}(\HH_{AB})\,\left|\, \exists\lambda>0\textup{ s.t. }  D(\Lambda)\leq \frac{1}{\lambda},\; \Tr\Lambda\sigma\leq E_F(\sigma)-\frac{\varepsilon}{\lambda}\;\; \forall\sigma\in\St(\HH_{AB})\right.\right\}.
$$
It is clear that $\,\mathfrak{F}(0)=\mathfrak{D}(+\infty)\,$ and variational expression (\ref{VE}) turns into the variational expression for the EoF in (\ref{EV}).
\end{remark}\medskip

\emph{Proof.} By the above remark we may assume that $\varepsilon>0$. Proposition \ref{mm}B and the second expression in (\ref{EV}) imply
\begin{align*}
E_{F,\varepsilon}(\rho)
&=\,
\sup_{\lambda>0}
\left\{
E_F^\lambda(\rho)-\frac{\varepsilon}{\lambda}
\right\}=
\sup_{\lambda>0}
\sup_{\Lambda\in \mathfrak{D}(1/\lambda)}
\left\{
\operatorname{Tr}\Lambda\rho-\frac{\varepsilon}{\lambda}
\right\}\\
&=\,
\sup_{\lambda>0}
\sup_{\Lambda\in \mathfrak{D}(1/\lambda)}
\operatorname{Tr}
\left(\Lambda-\frac{\varepsilon}{\lambda}I_{AB}\right)\rho=
\sup_{\Lambda\in \mathfrak{F}(\varepsilon)}\Tr \Lambda\rho,
\end{align*}
because $\,D\!\left(\Lambda-\frac{\varepsilon}{\lambda}I_{AB}\right)=D(\Lambda)$. $\Box$

\subsection{Weak subadditivity of the $\varepsilon$-smoothed EoF}

Unfortunately, the function $E_{F,\varepsilon}$ does not inherit the subadditivity property of the EoF. This is shown by Example \ref{nsa} below.
Nevertheless, one can prove the following\smallskip

\begin{proposition}\label{wsa} \emph{Let $A_i$ and $B_i$, $i=1,2$, be quantum systems of any dimensions.  Let $\varepsilon_1$ and $\varepsilon_2$ be arbitrary nonnegative numbers such that $\varepsilon_1+\varepsilon_2\leq 1$. The inequality
\begin{equation}\label{wsa+}
E_{F,\varepsilon_1+\varepsilon_2}(\rho\otimes\sigma)
\leq
E_{F,\varepsilon_1}(\rho)
+
E_{F,\varepsilon_2}(\sigma)
\end{equation}
holds for any  states $\rho$ in $\,\St(\HH_{A_1B_1})\,$ and $\sigma$ in $\,\St(\HH_{A_2B_2}),$ where $\rho\otimes\sigma$ is treated as a state of the bipartite system $(A_1A_2)(B_1B_2)$.  }
\end{proposition}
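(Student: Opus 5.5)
I would prove (\ref{wsa+}) directly from definition (\ref{SEF-def}), combining the subadditivity of the EoF with the behaviour of the trace norm under tensor products. If either term on the right-hand side of (\ref{wsa+}) is $+\infty$ there is nothing to prove. By Theorem \ref{P-1}A both terms are in fact finite when $\varepsilon_1,\varepsilon_2>0$; if, say, $\varepsilon_1=0$, the first term may be infinite and the claim is trivial. So assume both are finite and fix $\theta>0$. Choose states $\rho_*\in U_{\varepsilon_1}(\rho)$ and $\sigma_*\in U_{\varepsilon_2}(\sigma)$ with
$$
E_F(\rho_*)\leq E_{F,\varepsilon_1}(\rho)+\theta,\qquad E_F(\sigma_*)\leq E_{F,\varepsilon_2}(\sigma)+\theta.
$$

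Next I would check that the product state $\rho_*\otimes\sigma_*$ lies in $U_{\varepsilon_1+\varepsilon_2}(\rho\otimes\sigma)$. Insert the intermediate state $\rho_*\otimes\sigma$ and apply the triangle inequality:
$$
\|\rho\otimes\sigma-\rho_*\otimes\sigma_*\|_1\leq\|(\rho-\rho_*)\otimes\sigma\|_1+\|\rho_*\otimes(\sigma-\sigma_*)\|_1 .
$$
By multiplicativity of the trace norm on tensor products, $\|X\otimes Y\|_1=\|X\|_1\|Y\|_1$, and since $\|\sigma\|_1=\|\rho_*\|_1=1$, the right-hand side equals $\|\rho-\rho_*\|_1+\|\sigma-\sigma_*\|_1\leq 2(\varepsilon_1+\varepsilon_2)$. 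Hence $T(\rho\otimes\sigma,\rho_*\otimes\sigma_*)\leq\varepsilon_1+\varepsilon_2$.

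Then I would invoke subadditivity of $E_F$ on tensor products with respect to the bipartition $(A_1A_2)(B_1B_2)$:
$$
E_F(\rho_*\otimes\sigma_*)\leq E_F(\rho_*)+E_F(\sigma_*).
$$
In finite dimensions this follows from tensoring pure-state decompositions: if $\rho_*=\sum_i p_i\varphi_i$ and $\sigma_*=\sum_j q_j\psi_j$, then $\{p_iq_j,\varphi_i\otimes\psi_j\}$ is a pure-state ensemble with average $\rho_*\otimes\sigma_*$, and $S([\varphi_i\otimes\psi_j]_{A_1A_2})=S([\varphi_i]_{A_1})+S([\psi_j]_{B_1}\ldots)$ splits additively, i.e. $S([\varphi_i]_{A_1})+S([\psi_j]_{A_2})$. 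In infinite dimensions the same argument works with the discrete infimum in (\ref{EF-d}): the two infima are finite, so near-optimal countable ensembles exist and their product is again a countable ensemble. Combining the three steps gives
$$
E_{F,\varepsilon_1+\varepsilon_2}(\rho\otimes\sigma)\leq E_F(\rho_*\otimes\sigma_*)\leq E_{F,\varepsilon_1}(\rho)+E_{F,\varepsilon_2}(\sigma)+2\theta,
$$
and letting $\theta\to0$ yields (\ref{wsa+}).

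The only delicate point is the infinite-dimensional subadditivity of $E_F$, and it is settled by the discrete formula together with the coincidence of the discrete and continuous definitions in (\ref{EF-d}). As an alternative route, one could use the subadditivity of $E_F^\lambda$ from \cite{EF} together with Proposition \ref{mm}B. One would need $E_F^\lambda(\rho\otimes\sigma)$ bounded by terms with different $\lambda$'s, so the direct argument above seems simpler.
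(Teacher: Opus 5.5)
Your proof is correct, but it follows a different route from the paper's proof of Proposition \ref{wsa}. The paper uses the dual representation of Proposition \ref{mm}B. It writes $E_{F,\varepsilon_1+\varepsilon_2}(\rho\otimes\sigma)=\sup_{\lambda>0}\{E_F^\lambda(\rho\otimes\sigma)-(\varepsilon_1+\varepsilon_2)/\lambda\}$. It then applies the subadditivity $E_F^\lambda(\rho\otimes\sigma)\leq E_F^\lambda(\rho)+E_F^\lambda(\sigma)$ from \cite[Proposition 2]{EF}, which splits this supremum into the two suprema representing $E_{F,\varepsilon_1}(\rho)$ and $E_{F,\varepsilon_2}(\sigma)$.

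Your closing worry about needing different $\lambda$'s is unfounded. The same $\lambda$ serves both factors: bound $E_F^\lambda(\rho\otimes\sigma)-(\varepsilon_1+\varepsilon_2)/\lambda$ by $[E_F^\lambda(\rho)-\varepsilon_1/\lambda]+[E_F^\lambda(\sigma)-\varepsilon_2/\lambda]$, then use $\sup_\lambda(f+g)\leq\sup_\lambda f+\sup_\lambda g$.

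The two routes buy different things:
\begin{itemize}
\item The paper's argument is very short given the Moreau--Yosida machinery. It also makes explicit the link between weak subadditivity of $E_{F,\varepsilon}$ and subadditivity of $E_F^\lambda$. Remark \ref{wsa-r} mentions your direct argument as an alternative.
\item Your argument is self-contained. It needs only two ingredients: subadditivity of $E_F$, which you justify correctly in infinite dimensions via the discrete convex roof and additivity of entropy on product states, and the estimate $\|\rho\otimes\sigma-\rho_*\otimes\sigma_*\|_1\leq\|\rho-\rho_*\|_1+\|\sigma-\sigma_*\|_1$. It uses neither lower semicontinuity of $E_F$ nor the Fenchel--Moreau duality behind Proposition \ref{mm}B. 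That is why weak subadditivity holds for any subadditive measure regardless of lower semicontinuity, as noted in Section \ref{sec:possible-generalizations}.
\end{itemize}
One small fix: the expression $S([\psi_j]_{B_1}\ldots)$ in your entropy-splitting line is garbled, although you state the correct identity immediately afterwards.
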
\medskip

\emph{Proof.} By Proposition \ref{mm}B  we have
\begin{align*}
E_{F,\varepsilon_1+\varepsilon_2}(\rho\otimes\sigma)
&=
\sup_{\lambda>0}
\left\{
E_F^\lambda(\rho\otimes\sigma)
-\frac{\varepsilon_1+\varepsilon_2}{\lambda}
\right\}\\
&\leq
\sup_{\lambda>0}
\left\{
E_F^\lambda(\rho)-\frac{\varepsilon_1}{\lambda}
\right\}
+
\sup_{\lambda>0}
\left\{
E_F^\lambda(\sigma)-\frac{\varepsilon_2}{\lambda}
\right\}=
E_{F,\varepsilon_1}(\rho)
+
E_{F,\varepsilon_2}(\sigma),
\end{align*}
where the inequality is due to the subadditivity of the function $E_F^\lambda$ established in \cite[Proposition 2]{EF}. $\Box$\medskip

\begin{remark}\label{wsa-r} Inequality (\ref{wsa+}) can be directly deduced from the definition of the function $E_{F,\varepsilon}$ by using the inequality
\begin{equation*}
\|\rho\otimes\sigma-\rho'\otimes\sigma'\|_1
\leq
\|\rho-\rho'\|_1+\|\sigma-\sigma'\|_1
\end{equation*}
valid for arbitrary states $\rho,\rho'\in \St(\HH_{A_1B_1})$ and  arbitrary states $\sigma,\sigma'\in \St(\HH_{A_2B_2})$. The above proof relates the weak subadditivity of the function $E_{F,\varepsilon}$ with
the subadditivity of MY-approximation $E_F^\lambda$ via  expression (\ref{eq:backward}). It is easy to see that
using expression (\ref{eq:forward}) one can deduce the subadditivity of  $E_F^\lambda$ from the  weak subadditivity of $E_{F,\varepsilon}$.
\end{remark}\smallskip

The following example shows that the weak subadditivity inequality (\ref{wsa+}) does not imply subadditivity at a fixed smoothing radius $\varepsilon$.
In fact, it shows that the latter property fails even for two-qubit maximally entangled states.\smallskip

\begin{example}\label{nsa} For $d\geq 2$, let
\[
|\Omega_d\rangle
=
\frac{1}{\sqrt d}
\sum_{j=0}^{d-1}|j\rangle_A|j\rangle_B,
\qquad
\Omega_d=|\Omega_d\rangle\langle\Omega_d|.
\]
We will show that
\begin{equation}
E_{F,1/2}(\Omega_2)=0,
\qquad
E_{F,1/2}(\Omega_2\otimes\Omega_2)>0,
\label{eq:fixed-radius-counterexample}
\end{equation}
where $\,\Omega_2\otimes\Omega_2\,$ is treated as a state of the bipartite system  $(A_1A_2)(B_1B_2)$.

First, consider the separable state
\[
\tau_2
=
\frac12|00\rangle\langle00|
+
\frac12|11\rangle\langle11|.
\]
Since $E_F(\tau_2)=0$, it suffices to calculate its trace
distance from $\Omega_2$. We have
\[
\Omega_2-\tau_2
=
\frac12\left(
|00\rangle\langle11|
+
|11\rangle\langle00|
\right).
\]
The nonzero eigenvalues of this operator are $1/2$ and
$-1/2$. Therefore,
\[
T(\Omega_2,\tau_2)
=
\frac12\|\Omega_2-\tau_2\|_1
=
\frac12.
\]
By the definition of the smoothed EoF we have
\begin{equation}
E_{F,1/2}(\Omega_2)=0.
\label{eq:bell-smoothed-zero}
\end{equation}

We next use the well known fact that the distance from a maximally entangled
state to the set $\St_{\rm sep}(\mathbb C^d\otimes\mathbb C^d)$ of separable states on
$\mathbb C^d\otimes\mathbb C^d$ is determined by the formula\footnote{I would be grateful for a direct reference on this formula.}
\begin{equation}
\inf_{\omega\in\St_{\rm sep}(\mathbb C^d\otimes\mathbb C^d)}
T(\Omega_d,\omega)
=
1-\frac1d.
\label{eq:distance-max-ent-separable}
\end{equation}

To prove (\ref{eq:distance-max-ent-separable}) note that for every pair of unit vectors $\,|\alpha\rangle=(a_0,...,a_{d-1})^\bot$ and
$\,|b\rangle=(b_0,...,b_{d-1})^\bot$, the Cauchy--Schwarz inequality gives
\[
\begin{aligned}
|\langle\Omega_d|\alpha\otimes \beta\rangle|^2
&=
\frac1d
\left|\sum_{j=0}^{d-1}a_jb_j\right|^2\leq \frac1d.
\end{aligned}
\]
By convexity, every $\omega\in\St_{\rm sep}(\mathbb C^d\otimes\mathbb C^d)$ satisfies
\[
\langle\Omega_d|\omega|\Omega_d\rangle
\leq\frac1d.
\]
Since the spectral diameter of the operator $\Omega_d$ is equal to $1$, we have
\[
T(\Omega_d,\omega)
\geq
\left|\operatorname{Tr}\Omega_d(\Omega_d-\omega)\right|=
1-\langle\Omega_d|\omega|\Omega_d\rangle\geq 1-\frac1d.
\]

To prove that this lower bound is attained, consider the state
\[
\tau_d
=
\frac1d\sum_{j=0}^{d-1}|jj\rangle\langle jj|.
\]
On the subspace spanned by the vectors $|jj\rangle$,
the operator $\Omega_d-\tau_d$ has eigenvalue $(d-1)/d$
on $|\Omega_d\rangle$ and eigenvalue $-1/d$ with multiplicity
$d-1$. It vanishes on the orthogonal complement of this
subspace. Consequently, $\,T(\Omega_d,\tau_d)=1-\frac1d$,
which proves \eqref{eq:distance-max-ent-separable}.

Under the canonical reordering of tensor factors,
$\Omega_2\otimes\Omega_2$ is the maximally entangled state
$\Omega_4$ with respect to the bipartition $(A_1A_2)(B_1B_2)$.
So, it follows from \eqref{eq:distance-max-ent-separable} that
\begin{equation}
\inf_{\omega\in\St_{\rm sep}((\mathbb C^2\otimes\mathbb C^2)\otimes(\mathbb C^2\otimes\mathbb C^2))}
T(\Omega_2\otimes\Omega_2,\omega)=\inf_{\omega\in\St_{\rm sep}(\mathbb C^4\otimes\mathbb C^4)}
T(\Omega_4,\omega)=\frac34.
\label{eq:distance-phi4-separable}
\end{equation}

Hence it remains to prove that
\[
E_{F,1/2}(\Omega_2\otimes\Omega_2)=E_{F,1/2}(\Omega_4)>0.
\]

For every
state $\,\sigma \in U_{1/2}(\Omega_4)\doteq
\left\{
\sigma\in\St(\mathbb C^4\otimes\mathbb C^4)\,|\,
T(\Omega_4,\sigma)\leq\frac12
\right\}\,$
 and every state
$\omega\in\St_{\rm sep}(\mathbb C^4\otimes\mathbb C^4)$, the triangle inequality
and \eqref{eq:distance-phi4-separable} give
\[
T(\sigma,\omega)
\geq T(\Omega_4,\omega)-T(\Omega_4,\sigma)\geq \frac34-\frac12=\frac14.
\]
Thus, the set  $U_{1/2}(\Omega_4)$ contains no separable states.

Since the underlying Hilbert spaces are finite-dimensional,
the set $U_{1/2}(\Omega_4)$ is compact and the EoF is continuous.
Moreover, the EoF is nonnegative and vanishes precisely on
separable states. It follows that the minimum of $E_F$ on
$U_{1/2}(\Omega_4)$ is strictly positive. Therefore,
\begin{equation}
E_{F,1/2}(\Omega_4)
=
\min_{\sigma\in U_{1/2}(\Omega_4)}E_F(\sigma)>0.
\label{eq:positive-smoothed-phi4}
\end{equation}

Combining \eqref{eq:bell-smoothed-zero} and
\eqref{eq:positive-smoothed-phi4}, we obtain
\[
E_{F,1/2}(\Omega_2\otimes\Omega_2)
>
E_{F,1/2}(\Omega_2)
+
E_{F,1/2}(\Omega_2).
\]
Consequently, the smoothed EoF is not
subadditive when the same smoothing radius $\varepsilon$ is used for both
factors and their tensor product.

This counterexample is consistent with the weak subadditivity
inequality (\ref{wsa+}). Indeed, setting
$\,\varepsilon_1=\varepsilon_2=1/2\,$ in that inequality gives
\[
0=E_{F,1}(\Omega_2\otimes\Omega_2)
\leq
E_{F,1/2}(\Omega_2)
+
E_{F,1/2}(\Omega_2)
=0.
\]
\end{example}\smallskip

\begin{corollary}\label{tau} \emph{Let $A_i$ and $B_i$, $i=1,2$, be quantum systems of any dimensions and $\varepsilon\in[0,1]$. Then
\begin{equation}\label{tau+}
E_{F,\varepsilon}(\rho\otimes\sigma)=E_{F,\varepsilon}(\rho)
\end{equation}
for any  state $\rho$ in $\,\St(\HH_{A_1B_1})\,$ and any separable state $\sigma$ in $\,\St(\HH_{A_2B_2}),$ where $\rho\otimes\sigma$ is treated as a state of the bipartite system $(A_1A_2)(B_1B_2)$.}
\end{corollary}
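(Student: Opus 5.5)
Both inequalities in (\ref{tau+}) can be proved with tools already at hand: the upper bound from weak subadditivity, the lower bound from LOCC monotonicity.

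\emph{Upper bound.} Apply Proposition \ref{wsa} with $\varepsilon_1=\varepsilon$ and $\varepsilon_2=0$. Since $E_{F,0}(\sigma)=E_F(\sigma)=0$ for the separable state $\sigma$, this gives
\[
E_{F,\varepsilon}(\rho\otimes\sigma)\leq E_{F,\varepsilon}(\rho)+E_{F}(\sigma)=E_{F,\varepsilon}(\rho).
\]

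\emph{Lower bound.} Use Corollary \ref{E-r}, or more precisely Theorem \ref{P-1}G, applied to the partial trace over $A_2B_2$. Consider the map $\Phi(\omega)=\Tr_{A_2B_2}\,\omega$ from $\St(\HH_{A_1A_2B_1B_2})$ to $\St(\HH_{A_1B_1})$. It is a local operation: Alice discards $A_2$ and Bob discards $B_2$. Hence:
\begin{itemize}
\item $E_F(\Phi(\omega))\leq E_F(\omega)$ by the LOCC monotonicity of the EoF;
\item $\Phi$ does not increase the trace norm distance.
\end{itemize}
Since $\Phi(\rho\otimes\sigma)=\rho$, Theorem \ref{P-1}G gives $E_{F,\varepsilon}(\rho)\leq E_{F,\varepsilon}(\rho\otimes\sigma)$.

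Theorem \ref{P-1}G is stated for a transformation of a fixed space into itself, so we must check that it still applies when input and output spaces differ. Its proof uses only Proposition \ref{mm}B together with the monotonicity $E_F^\lambda(\Phi(\cdot))\leq E_F^\lambda(\cdot)$ from \cite{EF}, and both hold between different spaces. For extra safety we can check the inequality directly from definition (\ref{SEF-def}). Take any $\omega$ with $T(\rho\otimes\sigma,\omega)\leq\varepsilon$. Then $\omega_1\doteq\Tr_{A_2B_2}\omega$ satisfies $T(\rho,\omega_1)\leq\varepsilon$ and $E_F(\omega_1)\leq E_F(\omega)$. Therefore $E_{F,\varepsilon}(\rho)\leq E_F(\omega)$, and taking the infimum over $\omega$ gives the claim.

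The only step that needs care is the monotonicity of the EoF under partial trace of local subsystems in infinite dimensions. Fix a measure $\mu$ optimal for $\omega$ in (\ref{EF-d+}). Each pure state $\varrho$ in its support reduces to a state on $A_1B_1$, and that state is the barycenter of an ensemble of pure states. Every such pure state is obtained from $\varrho$ by a local measurement on $A_2B_2$, and its $A_1$-marginal has averaged entropy at most $S(\varrho_{A_1A_2})$. This follows from the concavity of entropy combined with the fact that discarding or measuring a local subsystem cannot increase the average local entropy of pure states; this is the standard monotonicity of $E_F$ under local operations \cite{EM,Lami-new}. So the monotonicity can be quoted rather than reproved.

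Combining the two bounds gives (\ref{tau+}).
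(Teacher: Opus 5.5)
Your proof is correct and follows the paper's own argument: the upper bound comes from Proposition~\ref{wsa} with $\varepsilon_2=0$ and $E_F(\sigma)=0$, and the lower bound from the LOCC monotonicity in Corollary~\ref{E-r}, applied to the local discarding of $A_2B_2$. Your direct check from definition (\ref{SEF-def}) is a useful addition, because Theorem~\ref{P-1}G is formally stated only for maps of $\St(\HH_{AB})$ into itself, while the partial trace changes the underlying space.
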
\smallskip

\emph{Proof.}  Proposition \ref{wsa} with $\,\varepsilon=\varepsilon_1\,$ and $\,\varepsilon_2=0\,$ implies that $"\leq"$ holds in (\ref{tau+}), since $E_F(\sigma)=0$. The converse inequality
follows from Corollary \ref{E-r} in Section 3.1. $\Box$

\subsection{Upper bounds on $E_F(\rho)-E_{F,\varepsilon}(\rho)$}

By Theorem \ref{P-1}D the family $\left\{E_{F,\varepsilon}\right\}_{\varepsilon\in(0,1]}$
of $\varepsilon$-smoothed EoF  pointwise converges to the function
$E_F$ as $\,\varepsilon\to0^+$. In this section we obtain
easily computable estimates for the rate of this convergence by using the (lower) semicontinuity bounds for the function $E_F$
obtained in~\cite{LCB+,AOE,EF}.\smallskip

The results below are based on the following obvious lemma.\smallskip

\begin{lemma}\label{vil}
\emph{Let $A$ and $B$ be quantum systems of any dimensions.
Let $\rho$ be a state of the system $AB$ with finite $E_F(\rho)$. If there exists a function $B_T$ on $[0,1]$ such that
$\,E_F(\rho)-E_F(\sigma)\leq B_T(\varepsilon)\,$
for all states $\,\sigma\in\St(\mathcal{H}_{AB})$ satisfying the condition $\,T(\rho,\sigma)\leq\varepsilon\in[0,1]\,$ then}
$$
   E_F(\rho)-E_{F,\varepsilon}(\rho)\leq B_T(\varepsilon)\qquad \forall \varepsilon\in[0,1].
$$
\end{lemma}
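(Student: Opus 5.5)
The lemma follows directly from the first expression for $E_{F,\varepsilon}$ in (\ref{SEF-def}). The plan is to rewrite the hypothesis as a lower bound on $E_F$ over the ball $U_\varepsilon(\rho)$ and then take the infimum over that ball.

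Fix $\varepsilon\in[0,1]$. Since $E_F(\rho)<+\infty$, the hypothesis can be rearranged without any $\infty-\infty$ ambiguity. It says that
$$
E_F(\sigma)\geq E_F(\rho)-B_T(\varepsilon)\qquad\forall\sigma\in U_\varepsilon(\rho).
$$
If $E_F(\sigma)=+\infty$ for some such $\sigma$, the inequality holds trivially, so all cases are covered. Taking the infimum over $\sigma\in U_\varepsilon(\rho)$ and using (\ref{SEF-def}) gives
$$
E_{F,\varepsilon}(\rho)\geq E_F(\rho)-B_T(\varepsilon).
$$
This is equivalent to the claimed inequality, because $E_{F,\varepsilon}(\rho)\leq E_F(\rho)<+\infty$: the state $\rho$ itself lies in $U_\varepsilon(\rho)$, so the difference $E_F(\rho)-E_{F,\varepsilon}(\rho)$ is a well-defined nonnegative real number.

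There is no genuine obstacle. Two points need a remark. First, the infimum in (\ref{SEF-def}) need not be attained in infinite dimensions, but the argument only bounds the infimum from below, so attainment plays no role. Second, $B_T(\varepsilon)$ may be $+\infty$, in which case the claim is vacuous. For $\varepsilon=0$ the statement reduces to $0\leq B_T(0)$, which follows from the hypothesis with $\sigma=\rho$.
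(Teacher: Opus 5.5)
Your argument is correct and is the reasoning the paper intends. The paper calls this lemma obvious and gives no proof. You rearrange the hypothesis into the lower bound $E_F(\sigma)\geq E_F(\rho)-B_T(\varepsilon)$ on $U_\varepsilon(\rho)$, take the infimum in the first expression of (\ref{SEF-def}), and correctly handle the finiteness and non-attainment issues.
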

\medskip

By using  Lemma \ref{vil} and the semicontinuity bounds for the function $E_F$
from  Proposition 9 in \cite{AOE} and Proposition 10 in \cite{LCB+}  we obtain the following

\smallskip

\begin{proposition}\label{P-2} \emph{Let $A$ and $B$ be quantum systems of any dimensions and  $\rho$  a state of the system $AB$. Let  $\,\varepsilon\in(0,1]\,$ be arbitrary  and  $\,\delta_\varepsilon\doteq\sqrt{\varepsilon(2-\varepsilon)}$. Let  $\,h^\uparrow$ be non-decreasing
envelope of the binary entropy defined in (\ref{h+}).}
\smallskip

A) \emph{If $\,2\leq\rank \rho_A = d <+\infty,$ then}
\begin{equation}\label{pt-1}
    E_F(\rho)-E_{F,\varepsilon}(\rho)
    \leq \delta_\varepsilon\ln(d-1)+h^{\uparrow}(\delta_\varepsilon).
\end{equation}

B) \emph{If $\,\Tr H\rho_A = E < +\infty\,$ for some positive operator $H$ on $\HH_A$ satisfying conditions (\ref{H-cond}) and (\ref{star}), then
\begin{equation}\label{pt-2}
    E_F(\rho)-E_{F,\varepsilon}(\rho)
    \leq
    \delta_\varepsilon F_H\!\left(\frac{E}{\delta_\varepsilon}\right)+h^{\uparrow}(\delta_\varepsilon),
\end{equation}
where $F_H$ is the function defined in (\ref{F-def}). The r.h.s. of  (\ref{pt-2}) tends to zero as $\,\varepsilon\to0^+$.\footnote{The r.h.s. of (\ref{pt-2}) tends to zero as $\,\varepsilon\to0^+$ due to the equivalence of (\ref{H-cond}) and (\ref{H-cond-a}).}
}\end{proposition}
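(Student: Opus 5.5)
The proof is a direct application of Lemma \ref{vil}. The work is to produce, in each case, a function $B_T(\varepsilon)$ that bounds $E_F(\rho)-E_F(\sigma)$ uniformly over $\sigma\in U_\varepsilon(\rho)$. Both cases use the one-sided (semicontinuity) bounds for $E_F$ from Proposition 9 in \cite{AOE} and Proposition 10 in \cite{LCB+}. Their key feature is that a rank or energy constraint is imposed only on the marginal $\rho_A$ of the fixed state $\rho$, not on $\sigma$. This is exactly what Lemma \ref{vil} requires, since $\sigma$ ranges over the whole ball $U_\varepsilon(\rho)$ and may have infinite rank or energy.

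First I would recall the form of these bounds. They are stated in terms of the fidelity, or equivalently the purified distance $P(\rho,\sigma)=\sqrt{1-F(\rho,\sigma)}$. Schematically:
\begin{itemize}
\item if $\rank\rho_A=d<+\infty$ and $P(\rho,\sigma)\le\delta$, then $E_F(\rho)-E_F(\sigma)\le \delta\ln(d-1)+h^\uparrow(\delta)$;
\item if $\Tr H\rho_A=E$ under (\ref{H-cond}) and (\ref{star}), then $E_F(\rho)-E_F(\sigma)\le\delta F_H(E/\delta)+h^\uparrow(\delta)$.
\end{itemize}
I would check the exact constants and the monotonicity in $\delta$ of the right-hand sides. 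The first holds because $h^\uparrow$ is nondecreasing; the second because $\delta\mapsto\delta F_H(E/\delta)$ is nondecreasing, by the concavity of $F_H$ and $F_H(0)\ge0$. Monotonicity matters because the bounds are then valid with any $\delta\ge P(\rho,\sigma)$.

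Second, I would convert the trace-distance condition into a purified-distance one. By the Fuchs–van de Graaf inequality $1-\sqrt{F}\le T$, so $\sqrt F\ge 1-\varepsilon$. Hence
$$
P(\rho,\sigma)\le\sqrt{1-(1-\varepsilon)^2}=\sqrt{\varepsilon(2-\varepsilon)}=\delta_\varepsilon .
$$
Combined with the monotonicity from the first step, this gives $E_F(\rho)-E_F(\sigma)\le B_T(\varepsilon)$ for every $\sigma$ with $T(\rho,\sigma)\le\varepsilon$, where $B_T$ is the right-hand side of (\ref{pt-1}) or (\ref{pt-2}). Lemma \ref{vil} then yields both claims. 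The finiteness of $E_F(\rho)$ needed in the lemma follows from $E_F(\rho)\le S(\rho_A)$, which is at most $\ln d$ in case A and at most $F_H(E)<\infty$ in case B. One technical point: if $\delta_\varepsilon$ exceeds the admissible range of the cited bounds (e.g. $\delta\le 1$ or $\delta<1/2$), I would handle that range separately. There the trivial bound $E_F(\rho)\le S(\rho_A)$ should already be dominated by the right-hand side.

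Finally, the vanishing of the r.h.s. of (\ref{pt-2}) as $\varepsilon\to0^+$ follows from (\ref{H-cond-a}): write $\delta F_H(E/\delta)=E\cdot F_H(x)/x$ with $x=E/\delta\to\infty$, and note $h^\uparrow(\delta_\varepsilon)\to0$.

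I expect the main obstacle to be the first step: matching the exact hypotheses of the cited semicontinuity bounds. I must make sure they are one-sided in the right direction, lower-bounding $E_F(\sigma)$ with constraints only on $\rho$. I must also confirm which distance they use. If they are already stated in trace distance with $\delta_\varepsilon$ built in, the Fuchs–van de Graaf step becomes unnecessary.
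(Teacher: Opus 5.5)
Your proposal is correct and is essentially the paper's argument: the paper simply combines Lemma \ref{vil} with the semicontinuity bounds of \cite{AOE,LCB+}, and $\delta_\varepsilon=\sqrt{\varepsilon(2-\varepsilon)}$ comes precisely from the Fuchs--van de Graaf conversion you describe. One small caveat concerns your contingency remark: it is unnecessary, since the cited bounds are used on the whole range $\delta\in(0,1]$ (hence the envelope $h^{\uparrow}$), and it would not work as stated anyway, because for $\delta$ slightly above $1/2$ and large $d$ the trivial bound $\ln d$ exceeds $\delta\ln(d-1)+\ln 2$.
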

\medskip

\textbf{Note:} Claim B of Proposition \ref{P-2} is applicable to any state $\rho$ in $\St(\HH_{AB})$ such that $\,S(\rho_X)<+\infty,$  where $X$ is either $A$ or $B$, since for any such state there exists positive operators $H$ on $\HH_X$ satisfying  conditions (\ref{H-cond}) and (\ref{star}) such that $\,\Tr H\rho_X<+\infty$  \cite[Proposition 4]{EC}.\medskip

Proposition \ref{P-2} implies the following upper bounds on the rate of uniform convergence of
the $\varepsilon$-smoothed EoF  to the EoF as $\varepsilon\to0^+$ on the set of states with bounded rank/energy  of one of the marginal states.\smallskip

\begin{corollary}\label{C-1} \emph{The family $\{E_{F,\varepsilon}\}_{\varepsilon\in(0,1]}$
of $\varepsilon$-smoothed EoF  uniformly converges to the function $E_F$ on
the sets
\[
    \C^A_d
    =
    \{\rho\in\St(\mathcal{H}_{AB})\, |\,
    \rank\rho_A \leq d<+\infty\}
\]
and
\[
    \C^A_{H,E}
    =
    \{\rho\in\St(\mathcal{H}_{AB})\, |\,
    \Tr H\rho_A \leq E\},
\]
where $H$ is any positive operator on $\HH_A$ satisfying  conditions (\ref{H-cond}) and (\ref{star}). Moreover,
\begin{equation}\label{ub1}
    \sup_{\rho\in \C^A_d}
    \left|E_F(\rho)-E_{F,\varepsilon}(\rho)\right|
    \leq  \delta_\varepsilon\ln(d-1)
        +h^\uparrow(\delta_\varepsilon)\qquad (d\geq 2)
\end{equation}
and
\begin{equation}\label{ub2}
    \sup_{\rho\in\C^A_{H,E}}
    \left|E_F(\rho)-E_{F,\varepsilon}(\rho)\right|
    \leq  \delta_\varepsilon F_H\!\left(\frac{E}{\delta_\varepsilon}\right)
        + h^\uparrow(\delta_\varepsilon),
\end{equation}
where $F_H$ and $\,h^\uparrow$ are the functions defined  in (\ref{F-def}) and (\ref{h+}), respectively,  and $\,\delta_\varepsilon\doteq\sqrt{\varepsilon(2-\varepsilon)}$.}
\emph{The r.h.s. of  (\ref{ub2}) tends to zero as $\,\varepsilon\to0^+$.}
\end{corollary}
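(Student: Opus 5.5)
The plan is to derive the corollary pointwise from Proposition \ref{P-2} and then take suprema. Since the right-hand sides of (\ref{pt-1}) and (\ref{pt-2}) depend on $\rho$ only through $d$ or $E$, the bounds pass directly to the suprema. First, by Theorem \ref{P-1}D we have $E_{F,\varepsilon}(\rho)\leq E_F(\rho)$, so the absolute value in (\ref{ub1}) and (\ref{ub2}) equals $E_F(\rho)-E_{F,\varepsilon}(\rho)$. This difference is finite on both sets, because $E_F(\rho)\leq S(\rho_A)$, and $S(\rho_A)\leq\ln d$ on $\C^A_d$ while $S(\rho_A)\leq F_H(E)<+\infty$ on $\C^A_{H,E}$ by the Gibbs condition.

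For $\C^A_d$, I would fix $\rho$ with $\rank\rho_A=d'\leq d$ and split into cases.
\begin{itemize}
\item If $d'=1$, then $\rho_A$ is pure, so $\rho=\rho_A\otimes\rho_B$ is a product state. Hence $E_F(\rho)=0$ and the difference vanishes.
\item If $2\leq d'\leq d$, Proposition \ref{P-2}A gives the bound $\delta_\varepsilon\ln(d'-1)+h^\uparrow(\delta_\varepsilon)$. This is at most $\delta_\varepsilon\ln(d-1)+h^\uparrow(\delta_\varepsilon)$, since $\ln$ is monotone.
\end{itemize}
One subtlety arises if $\delta_\varepsilon>1$ in the argument of $h^\uparrow$. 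This cannot happen: $\delta_\varepsilon=\sqrt{\varepsilon(2-\varepsilon)}\in(0,1]$ for $\varepsilon\in(0,1]$, so $h^\uparrow(\delta_\varepsilon)$ is well defined. Taking the supremum over $\rho$ gives (\ref{ub1}).

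For $\C^A_{H,E}$, take $\rho$ with $\Tr H\rho_A=E'\leq E$. Proposition \ref{P-2}B gives the bound $\delta_\varepsilon F_H(E'/\delta_\varepsilon)+h^\uparrow(\delta_\varepsilon)$. Since $F_H$ is nondecreasing (it is the supremum of the entropy over the increasing family of sets $\C_{H,E}$), this is at most the right-hand side of (\ref{ub2}). The case $E'=0$ also needs a check: under (\ref{star}) it forces $\rho_A$ to be supported on the kernel of $H$, which is finite-dimensional, so the bound is still covered by the proposition or is handled by the monotone bound directly.

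Uniform convergence follows once both right-hand sides tend to $0$ as $\varepsilon\to0^+$.
\begin{itemize}
\item For (\ref{ub1}): $\delta_\varepsilon\to0$ and $h^\uparrow$ is continuous with $h^\uparrow(0)=0$.
\item For (\ref{ub2}): write $\delta_\varepsilon F_H(E/\delta_\varepsilon)=E\cdot F_H(x)/x$ with $x=E/\delta_\varepsilon\to+\infty$. This tends to $0$ by (\ref{H-cond-a}), which is equivalent to (\ref{H-cond}) by \cite[Proposition 1]{EC}.
\end{itemize}

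The only genuinely delicate point is the monotonicity of $F_H$ and the degenerate boundary cases $d'=1$ and $E'=0$. Everything else is a direct transfer of Proposition \ref{P-2}.
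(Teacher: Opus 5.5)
Your proposal is correct and takes the paper's route: the paper simply states that the corollary follows from Proposition \ref{P-2}, and you make this explicit by applying that proposition pointwise, using that its right-hand sides are nondecreasing in $\rank\rho_A$ and in $\Tr H\rho_A$, disposing of the rank-one (product) case separately, and invoking $F_H(x)=o(x)$ for the limit. One minor point: Proposition \ref{P-2}B already covers $\Tr H\rho_A=0$, so that case needs no special treatment, and when $E=0$ you do not need the rewriting $E\cdot F_H(x)/x$, since $\delta_\varepsilon F_H(0)\to0$ trivially.
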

\medskip\medskip

The universal bounds given by Proposition \ref{P-2} can be improved for those states $\rho$ for which
there exist more accurate semicontinuity bounds than  the universal semicontinuity bounds
used in the proof of Proposition \ref{P-2}. In particular, for a state $\rho$ satisfying the condition of one of  cases A,B and C in Section 5 in \cite{EF}
Lemma \ref{vil} gives the bound
$$
E_F(\rho)-E_{F,\varepsilon}(\rho)
    \leq C_\rho\,\varepsilon,
$$
where $C_\rho$ is a characteristic of the spectrum of $\rho$.

\section{Regularization of the $\varepsilon$-smoothed Entanglement
of Formation and its relation to the $\varepsilon$-smoothed
Entanglement Cost}
\label{sec:regularization-smoothed-eof}

\subsection{Definitions and general remarks}

It was proved many years ago (cf.\cite{HaydenHorodeckiTerhal2001})  that the asymptotic entanglement cost
$E_C(\rho)$  of a state of a finite-dimensional bipartite quantum system is equal to the regularized EoF of this state defined as
\begin{equation}
E_F^\infty(\rho)
=
\lim_{n\to+\infty}\frac1n E_F(\rho^{\otimes n}),
\label{eq:regularized-eof}
\end{equation}
with the tensor product understood with respect to the bipartition
$(A^n)(B^n)$.
Consequently, in this setting we may, following to the general Mora-Piani-Briegel construction \cite{MPB} define
the $\varepsilon$-smoothed
Entanglement Cost as the minimal asymptotic entanglement cost among states
within trace distance $\varepsilon$ of a state $\rho$:
\begin{equation}\label{ECD}
E_{C,\varepsilon}(\rho)=
\inf_{\sigma:T(\rho,\sigma)\leq\varepsilon} E_C(\sigma)=\inf_{\sigma:T(\rho,\sigma)\leq\varepsilon} E_F^\infty(\sigma).
\end{equation}

The identification of $E_C$ with $E_F^\infty$ was established very recently for a state
of infinite-dimensional  bipartite quantum system if at least one of its  marginal entropies is finite
\cite{Lami-new}. This result, however, does not imply directly the second equality in (\ref{ECD}), because
$\varepsilon$-vicinity of any state of infinite-dimensional  bipartite quantum system contains states with both marginal entropies equal to $+\infty$ .

Keeping these results in mind, in this section, we compare two procedures:
first smoothing the EoF and then regularizing it, and first
regularizing the EoF and then smoothing the resulting functional.
We also compare these constructions with the operational
problem of preparing an approximate tensor power by LOCC.

We define the regularization of the smoothed EoF by
\begin{equation}
E_{F,\varepsilon}^{\infty}(\rho)
=
\limsup_{n\to+\infty}
\frac1n\, E_{F,\varepsilon}(\rho^{\otimes n}).
\label{eq:R-epsilon}
\end{equation}
The use of the upper limit is deliberate: in Section 3.4 it is shown that the subadditivity of
$E_F$ does not imply subadditivity of $E_{F,\varepsilon}$ at
a fixed $\varepsilon\in(0,1)$ and, hence, we cannot apply the standard arguments based on the Fekete lemma.\footnote{If $E_F(\rho)<+\infty$ then the  weak subadditivity and the monotonicity of
$E_{F,\varepsilon}$ under non-selective LOCC operations allow us to show that the sequence
of nonnegative numbers $\,a_n=\frac{1}{n}E_{F,\varepsilon}(\rho^{\otimes n})\,$ satisfies the condition
\[
a_n\leq \left(1+\frac{1}{n}\right)a_{n+1}
\leq a_n+\frac{c}{n},\quad\textrm{ where} \quad c=E_F(\rho).
\]
Unfortunately, this condition does not imply convergence of the sequence $\{a_n\}$ to a particular limit (one can construct a non-converging sequence $\{a_n\}$ of positive numbers satisfying this condition), but
it allows us to prove that all partial limits of this sequence  always form a certain  interval $\,[l,u]\,$  (it means that any $\,x\in[l,u]\,$ is a limit of some subsequence
of the sequence  $\{a_n\}$).}

By Theorem \ref{P-1}A for every $\varepsilon\in(0,1]$ and every state $\varrho$, we have
$\,0\leq E_{F,\varepsilon}(\varrho)<+\infty$.
It follows that each term in \eqref{eq:R-epsilon} is finite
when $\varepsilon>0$. Nevertheless, if $\,E_{F}^{\infty}(\rho)=+\infty,$ then the upper limit may be
infinite, and, in general, we can only conclude that
$\,E_{F,\varepsilon}(\rho)\in[0,+\infty]$.
At the endpoints,
\[
E_{F,0}^{\infty}(\rho)=E_F^\infty(\rho),
\qquad
E_{F,1}^{\infty}(\rho)=0.
\]
The first identity follows from $E_{F,0}=E_F$ and
\eqref{eq:regularized-eof}. For the second, every state lies
at trace distance at most one from every other state, so the
$\varepsilon$-vicinity of any state $\rho$ with $\varepsilon=1$ contains a separable state.

Theorem \ref{P-1}B implies that
\begin{equation}
E_{F,\varepsilon}^{\infty}(\rho)
\leq
(1-\varepsilon)E_F^\infty(\rho)\leq +\infty,
\qquad 0\leq\varepsilon<1,
\label{eq:R-epsilon-upper-bound}
\end{equation}

We define the smoothed regularized EoF by
\begin{equation}
[E^\infty_{F}]_\varepsilon(\rho)
=
\inf_{\substack{\sigma\in\mathfrak S(\mathcal H_{AB})\\
T(\rho,\sigma)\leq\varepsilon}}
E_F^\infty(\sigma).
\label{eq:R-tilde-epsilon}
\end{equation}
Thus $E_{F,\varepsilon}^{\infty}$ first smooths the EoF on the tensor
power and then takes the asymptotic rate, whereas
$[E^\infty_{F}]_\varepsilon$ first regularizes the EoF and
then minimizes the resulting functional over the trace-distance
ball.

The quantity $[E^\infty_{F}]_\varepsilon$ is well defined as an
extended nonnegative real number. In fact, for every
$\varepsilon>0$, the finite-dimensional truncation argument
used in the proof of Theorem \ref{P-1}A  provides a state $\sigma$ in the $\varepsilon$-ball
with finite EoF. Since
$\,E_F^\infty(\sigma)\leq E_F(\sigma)$,
we obtain
\[
0\leq [E^\infty_{F}]_\varepsilon(\rho)<+\infty
\qquad (\textrm{if } \varepsilon>0).
\]
At $\varepsilon=0$,
\[
[E^\infty_{F}]_0(\rho)=E_F^\infty(\rho)\leq+\infty.
\]
At $\varepsilon=1$, the $\varepsilon$-vicinity of any state $\rho$ contains separable
states, and hence $\,[E_{F}]_1^{\infty}(\rho)=0$.

\subsection{An example showing that $E^{\infty}_{F,\varepsilon}(\rho)>E_{C,\varepsilon}(\rho)=[E^\infty_{F}]_\varepsilon(\rho)$}

We now give an example in which $[E^\infty_{F}]_\varepsilon(\rho)$ and
$E_{F,\varepsilon}^{\infty}(\rho)$ differ strictly. Since this example
is in the finite-dimensional settings (in which  $[E^\infty_{F}]_\varepsilon(\rho)=E_{C,\varepsilon}(\rho)$), it also shows that
$E^{\infty}_{F,\varepsilon}(\rho)\neq E_{C,\varepsilon}(\rho)$ in general.

For an integer $d\geq2$, let
\[
|\Omega_d\rangle
=
\frac1{\sqrt d}\sum_{j=0}^{d-1}|j\rangle_A|j\rangle_B,
\qquad
\Omega_d=|\Omega_d\rangle\langle\Omega_d|,
\]
and define the separable state
\[
\tau_d
=
\frac1d\sum_{j=0}^{d-1}|jj\rangle\langle jj|.
\]
The operator $\Omega_d-\tau_d$ has eigenvalue $(d-1)/d$ on
$|\Omega_d\rangle$, eigenvalue $-1/d$ with multiplicity $d-1$
on the orthogonal complement of $|\Omega_d\rangle$ in the span
of $\{|jj\rangle\}_{j=0}^{d-1}$, and zero elsewhere. Hence
\begin{equation}
T(\Omega_d,\tau_d)=1-\frac1d.
\label{eq:Phi-tau-distance}
\end{equation}

Taking $\varepsilon=1-1/d$, the state $\tau_d$ is admissible
in \eqref{eq:R-tilde-epsilon}. Since it is separable, we have
$\,E_F^\infty(\tau_d)=0$. Therefore,
\begin{equation}
[E^\infty_{F}]_{1-1/d}(\Omega_d)=0.
\label{eq:R-tilde-Phi}
\end{equation}

We next prove the exact identity
\begin{equation}
E_{F,\varepsilon}^{\infty}(\Omega_d)=
(1-\varepsilon)\ln d,
\qquad 0\leq\varepsilon\leq1.
\label{eq:R-Phi-exact}
\end{equation}

Let $D=d^n$. Theorem \ref{P-1}C implies
\begin{equation}
E_{F,\varepsilon}(\Omega_D)
\leq(1-\varepsilon)E_{F}(\Omega_D)=(1-\varepsilon)\ln D.
\label{eq:Phi-upper}
\end{equation}

To prove the converse bound we will use the relative entropy of
entanglement
\[
E_R(\omega)
=
\inf_{\gamma\in\St_{\rm sep}(\mathbb C^{D}\otimes\mathbb C^{D})}
D(\omega\Vert\gamma),\quad \omega\in \St(\mathbb C^{D}\otimes\mathbb C^{D}),
\]
where $\,D(\omega\Vert\gamma)
=
\operatorname{Tr}\omega(\ln\omega-\ln\gamma)\,$
is the quantum relative entropy, and
$\St_{\rm sep}(\mathbb C^{D}\otimes\mathbb C^{D})$ is the set of separable states on
$\mathbb C^D\otimes\mathbb C^D$. We will use the standard inequality
\begin{equation}\label{FR}
E_F(\omega)\geq E_R(\omega)
\end{equation}
and the data-processing inequality for quantum relative
entropy \cite[Lemma 4]{L-2},\cite{Wilde}.

Let $\omega$ be any state satisfying
$T(\omega,\Omega_D)\leq\varepsilon$, and put $q=\operatorname{Tr}\Omega_D\omega$.
Since the spectral diameter of the operator $\Omega_D$ is equal to $1$, we have $\Tr\Omega_D(\Omega_D-\omega)\leq \varepsilon$ and hence
\[
q\geq1-\varepsilon.
\]
For any $\gamma\in\St_{\rm sep}(\mathbb C^{D}\otimes\mathbb C^{D})$, put
$p=\operatorname{Tr}\Omega_D\gamma$. The overlap of a maximally
entangled vector with any product vector is at most $1/D$ (by the arguments after (\ref{eq:distance-max-ent-separable})).
So, by convexity, $\,p\leq\frac1D$.

By applying the data-processing inequality to the two-outcome
measurement\break $\{\Omega_D,I-\Omega_D\}$ followed by $\Tr(\cdot)$ we obtain
\[
\begin{array}{c}
\displaystyle D(\omega\Vert\gamma)
\geq D(\{q,1-q\}\Vert\{p,1-p\})=
q\ln\frac{q}{p}
+(1-q)\ln\frac{1-q}{1-p}\\\\\geq q\ln D-h(q)\geq (1-\varepsilon)\ln D-\ln2,
\end{array}\]
where $h$ is the binary entropy. Taking the infimum over separable states $\gamma$ and using
(\ref{FR}) gives
\[
E_F(\omega)\geq
(1-\varepsilon)\ln D-\ln2.
\]
Since this holds for every state in the $\varepsilon$-vicinity of $\Omega_D$,
\begin{equation}
E_{F,\varepsilon}(\Omega_D)
\geq(1-\varepsilon)\ln D-\ln 2.
\label{eq:Phi-lower}
\end{equation}

Combining \eqref{eq:Phi-upper} and \eqref{eq:Phi-lower},
dividing by $n$, and using $\ln D=n\ln d$, we conclude that
\[
\lim_{n\to\infty}
\frac1n\, E_{F,\varepsilon}(\Omega_d^{\otimes n})
=
(1-\varepsilon)\ln d.
\]
This proves \eqref{eq:R-Phi-exact}.\smallskip

Thus,  equations
\eqref{eq:R-tilde-Phi} and \eqref{eq:R-Phi-exact} give
\begin{equation*}
E_{F,1-1/d}(\Omega_d)
=
\frac{\ln d}{d}
>
0
=
[E^\infty_{F}]_{1-1/d}(\Omega_d).
\end{equation*}
Therefore, smoothing the EoF before regularization and smoothing
the regularized EoF are not equivalent in general.\medskip

The above example  and the  unsuccessful attempts to find an example with an inverse inequality lead to the following open question.

\medskip
\noindent
\textbf{Open question.}
For every bipartite state $\rho$ on a tensor product of
separable Hilbert spaces and every $0<\varepsilon<1$, is it
true that
\begin{equation*}
E^\infty_{F,\varepsilon}(\rho)
\geq
[E^\infty_{F}]_\varepsilon(\rho)?
\end{equation*}

\subsection{Relation to the operational cost at fixed error}

We now distinguish the preceding state-space smoothing from
an operational preparation task with a fixed error tolerance.
For $0\leq\varepsilon<1$, define
\begin{equation}
C_n^\varepsilon(\rho)
=
\inf\left\{
\ln M\,\left|\,
\begin{array}{l}
\text{there exists a deterministic LOCC channel }\Lambda_n\\
\text{such that }
T\bigl(\Lambda_n(\Omega_M),\rho^{\otimes n}\bigr)
\leq\varepsilon
\end{array}
\right.\right\},
\label{eq:fixed-error-cost-block}
\end{equation}
where $\Omega_M$ is a maximally entangled state of Schmidt rank $M$.
The corresponding fixed-error asymptotic rate is
\begin{equation}
C_{\mathrm{op}}^\varepsilon(\rho)
=
\limsup_{n\to+\infty}\frac1n\, C_n^\varepsilon(\rho).
\label{eq:fixed-error-cost-rate}
\end{equation}
This definition keeps the error tolerance $\varepsilon$  fixed as $n$ tends to
infinity. It is therefore distinct, by definition, from the
usual entanglement cost, where the preparation error is required
to vanish asymptotically.\footnote{A closely related operational approach was developed by
Buscemi and Datta~\cite{Bus}, who studied the
one-shot entanglement cost under LOCC. In their setting, the
cost is defined as the minimum logarithm of the Schmidt rank
of a maximally entangled state from which the target state can
be prepared, with a prescribed accuracy. More precisely, the
accuracy requirement is expressed in terms of the fidelity:
the output state $\omega$ must satisfy
$F^2(\omega,\rho)\geq 1-\varepsilon$.
They derived upper and lower bounds on this one-shot cost in
terms of smoothed conditional zero-R\'enyi entropies and used
these bounds to recover the standard identity
$E_C(\rho)=E_F^\infty(\rho)$ in the asymptotic limit.

The operational quantity $C_{\mathrm{op}}^\varepsilon(\rho)$
defined above is different in two respects. First, we formulate
the accuracy constraint using the trace distance rather than
the fidelity. Second, we keep the error tolerance $\varepsilon$
fixed as $n\to+\infty$, instead of subsequently taking the limit
$\varepsilon\to0$ used in the standard asymptotic entanglement
cost.}

The basic relation between $E^\infty_{F,\varepsilon}$ and this operational
quantity is immediate from LOCC monotonicity of the EoF.
Suppose an LOCC protocol using the state $\Omega_M$ produces a state
$\omega_n$ such that
\[
T(\omega_n,\rho^{\otimes n})\leq\varepsilon.
\]
Then
\[
E_{F,\varepsilon}(\rho^{\otimes n})
\leq E_F(\omega_n)
\leq E_F(\Omega_M)
=\ln M.
\]
Taking the infimum over admissible protocols gives
\[
E_{F,\varepsilon}(\rho^{\otimes n})
\leq C_n^\varepsilon(\rho).
\]
After division by $n$ and passage to the upper limit, we obtain
\begin{equation}
E^\infty_{F,\varepsilon}(\rho)
\leq C_{\mathrm{op}}^\varepsilon(\rho).
\label{eq:R-operational-bound}
\end{equation}

For maximally entangled states this bound is strict whenever
$0<\varepsilon<1$. We will prove below that
\begin{equation}
C_{\mathrm{op}}^\varepsilon(\Omega_d)=\ln d,
\qquad 0\leq\varepsilon<1.
\label{eq:operational-Phi}
\end{equation}
while \eqref{eq:R-Phi-exact} implies
\[
E^\infty_{F,\varepsilon}(\Omega_d)
=
(1-\varepsilon)\ln d
<
\ln d
=
C_{\mathrm{op}}^\varepsilon(\Omega_d)
\]
for $0<\varepsilon<1$. This example shows that the
regularization of the smoothed EoF may be not equal to the
operational preparation cost at fixed error.\medskip

To prove (\ref{eq:operational-Phi}), put $D=d^n$, so that,
under the canonical identification of the tensor factors,
\[
  \Omega_d^{\otimes n}=\Omega_D,
  \qquad
  |\Omega_D\rangle=\frac{1}{\sqrt D}
  \sum_{j=0}^{D-1}|j\rangle|j\rangle.
\]

To prove that $"\geq"$ holds in (\ref{eq:operational-Phi}) suppose that a deterministic LOCC
protocol using a maximally entangled state $\Omega_M$ produces a
state $\omega_n$ satisfying
\[
  T(\omega_n,\Omega_D)\leq\varepsilon.
\]
Since a LOCC operation cannot increase
the Schmidt number of a state, the Schmidt number of $\omega_n$
is at most $M$.  We next show that every bipartite state $\omega$ of Schmidt number
at most $M$ satisfies
\begin{equation}
  \langle\Omega_D|\omega|\Omega_D\rangle
  \leq \frac{M}{D}.
  \label{eq:schmidt-number-overlap}
\end{equation}
To this end, consider first a  pure state $|\psi\rangle$
of Schmidt rank $r\leq M$, with the Schmidt decomposition
\[
  |\psi\rangle
  =\sum_{i=1}^{r}\sqrt{p_i}\,
    |\alpha_i\rangle|\beta_i\rangle,
  \qquad
  p_i\geq0,\quad \sum_{i=1}^{r}p_i=1,\quad \|\alpha_i\|=\|\beta_i\|=1.
\]
Then
\[
  \langle\Omega_D|\psi\rangle
  =\frac{1}{\sqrt D}
    \sum_{i=1}^{r}\sqrt{p_i}
    \sum_{j=0}^{D-1}
       \langle j|\alpha_i\rangle\langle j|\beta_i\rangle.
\]
For every $i$, the Cauchy--Schwarz inequality gives
\[
  \left|
    \sum_{j=0}^{D-1}
       \langle j|\alpha_i\rangle\langle j|\beta_i\rangle
  \right|
  \leq 1,
\]
because $|\alpha_i\rangle$ and $|\beta_i\rangle$ are unit vectors. Therefore,
using Cauchy--Schwarz inequality once more, we obtain
\[
  |\langle\Omega_D|\psi\rangle|
  \leq\frac{1}{\sqrt D}\sum_{i=1}^{r}\sqrt{p_i}\leq \sqrt{\frac{r}{D}}
  \leq \sqrt{\frac{M}{D}}.
\]
Consequently,
\[
\langle\Omega_D|\psi\rangle\langle\psi|\Omega_D\rangle=|\langle\Omega_D|\psi\rangle|^2\leq\frac{M}{D}.
\]
Since any mixed state $\,\omega\,$ of Schmidt number at most $M$ admits a convex
decomposition into pure states of Schmidt rank at most $M$, this proves
\eqref{eq:schmidt-number-overlap}.

Now, since the spectral diameter of the operator $\Omega_D$ is equal to $1$, we have
\[
  1-\langle\Omega_D|\omega_n|\Omega_D\rangle
  =\left|\operatorname{Tr}
       \Omega_D(\Omega_D-\omega_n)\right|\leq T(\Omega_D,\omega_n)
  \leq\varepsilon.
\]
It follows that
\[
  1-\varepsilon
  \leq \langle\Omega_D|\omega_n|\Omega_D\rangle
  \leq \frac{M}{D}.
\]
Hence every admissible protocol must satisfy
\begin{equation*}
  M\geq (1-\varepsilon)D
    =(1-\varepsilon)d^n.
\end{equation*}
Taking logarithms and then the infimum over all admissible
protocols, we conclude that
\begin{equation}
  C_n^\varepsilon(\Omega_d)
  \geq n\ln d+\ln(1-\varepsilon),
  \qquad 0\leq\varepsilon<1.
  \label{eq:operational-cost-lower-bound}
\end{equation}

To prove that $"\leq"$ holds in (\ref{eq:operational-Phi}), observe that the state
$\Omega_d^{\otimes n}$ is itself maximally entangled of Schmidt
rank $d^n$. It can therefore be prepared exactly by the identity
LOCC channel from the resource state $\Omega_{d^n}$. In particular,
this protocol satisfies the error constraint for every
$\varepsilon\geq0$ and gives
\begin{equation}
  C_n^\varepsilon(\Omega_d)\leq\ln(d^n)=n\ln d.
  \label{eq:operational-cost-upper-bound}
\end{equation}

Combining \eqref{eq:operational-cost-lower-bound} and
\eqref{eq:operational-cost-upper-bound}, we obtain
\[
C_{\mathrm{op}}^\varepsilon(\Omega_d)=\lim_{n\to\infty}\frac{1}{n}\,C_n^\varepsilon(\Omega_d)
  =\ln d, \qquad 0\leq\varepsilon<1.
\]

\section{Summary of the results}

In this article, we have explored the $\varepsilon$-smoothed Entanglement of Formation (EoF), defined for a bipartite state $\rho$ and $\varepsilon\in[0,1]$ by
\[
 E_{F,\varepsilon}(\rho)
 =\inf_{\sigma:\,T(\rho,\sigma)\leq\varepsilon}E_F(\sigma),
 \qquad T(\rho,\sigma)=\tfrac12\|\rho-\sigma\|_1.
\]
This construction quantifies the entanglement that is guaranteed when the state is specified only up to a given trace-distance error. In particular, $E_{F,0}=E_F$ and $E_{F,1}=0$.

A central result of the article is a useful relation between the $\varepsilon$-smoothed EoF and the Moreau--Yosida approximation $E_F^\lambda$ of the EoF explored in \cite{EF}. It is proved that for every bipartite state $\rho$,
\[
 E_F^\lambda(\rho)=\min_{0\leq\varepsilon\leq1}
 \left\{E_{F,\varepsilon}(\rho)+\frac{\varepsilon}{\lambda}\right\},\;\qquad  \lambda>0
\]
(where the minimum is always attained) and that
\[
 E_{F,\varepsilon}(\rho)=\sup_{\lambda>0}
 \left\{E_F^\lambda(\rho)-\frac{\varepsilon}{\lambda}\right\},\qquad \varepsilon\in(0,1].
\]
The supremum in the latter representation need not be attained, even for two-qubit states. These dual formulas allow us to transfer properties of the Moreau--Yosida approximation to the smoothed EoF and to obtain a variational expression for $E_{F,\varepsilon}$.\smallskip

We have established the following basic properties of the smoothed EoF:
\begin{itemize}
\item For every $\varepsilon>0$, the value $E_{F,\varepsilon}(\rho)$ is finite, including in infinite-dimensional systems. For each fixed state $\rho$, the function $\,\varepsilon\mapsto E_{F,\varepsilon}(\rho)\in[0,+\infty]\,$ is continuous, nonincreasing and convex on $[0,1]$, and
\[
 E_{F,\varepsilon}(\rho)\nearrow E_F(\rho)\leq+\infty
 \qquad\text{as }\quad\varepsilon\searrow0.
\]
If $\,E_F(\rho)<+\infty$, then $\,E_{F,\varepsilon}(\rho)\leq(1-\varepsilon)E_F(\rho)$.
\item The equality $\,E_{F,\varepsilon}(\rho)=0\,$ holds if and \emph{only if} the closed trace-distance $\varepsilon$-vicinity of $\rho$ contains a separable state. Writing $d\!\left(\rho,\St^{AB}_{\rm sep}\right)$ for the trace distance from $\rho$ to the set $\St^{AB}_{\rm sep}$ of separable states, we obtain, whenever $d\!\left(\rho,\St^{AB}_{\rm sep}\right)>\varepsilon$,
\[
 E_{F,\varepsilon}(\rho)\geq
 2\bigl(d\!\left(\rho,\St^{AB}_{\rm sep}\right)-\varepsilon\bigr)^2>0.
\]

\item For every fixed $\varepsilon>0$, the function $\rho\mapsto E_{F,\varepsilon}(\rho)$ is convex and continuous, including in infinite-dimensional systems. It is Lipschitz continuous on sets of states, on which $E_F$ is bounded, and does not increase under nonselective LOCC operations.
\item We obtain the  variational expression
\[
 E_{F,\varepsilon}(\rho)=\sup_{\Lambda\in \mathfrak{F}(\varepsilon)}
 \operatorname{Tr}\Lambda\rho,
\]
where, writing $D(\Lambda)$ for the spectral diameter of $\Lambda$,
\begin{align*}
\mathfrak{F}(\varepsilon)&\doteq\left\{\Lambda \in\B_{\rm sa}(\HH_{AB})\left|\, \exists\lambda>0\,:\,D(\Lambda)\leq \frac{1}{\lambda},\,\langle\psi|\Lambda|\psi\rangle
     \leq S(\psi_A)-\frac{\varepsilon}{\lambda}\;\; \forall\psi\in\HH^1_{AB}\!\right.\right\}\\
&=\left\{\Lambda\in\mathfrak B_{\rm sa}(\mathcal H_{AB})\left|\,\exists\lambda>0\, : \,D(\Lambda)\leq\frac1\lambda,\, \operatorname{Tr}\Lambda\sigma\leq E_F(\sigma)-\frac{\varepsilon}{\lambda}\;\; \forall\sigma\in\mathfrak S(\mathcal H_{AB})\!\right.\right\}
\end{align*}
($\HH^1_{AB}$ is the unit sphere in $\HH_{AB}$).  At $\varepsilon=0$, this representation reduces to the standard variational expression for the EoF.
\item The smoothed EoF satisfies the weak subadditivity inequality
\[
 E_{F,\varepsilon_1+\varepsilon_2}(\rho\otimes\sigma)
 \leq E_{F,\varepsilon_1}(\rho)+E_{F,\varepsilon_2}(\sigma),
 \qquad \varepsilon_1,\varepsilon_2\geq 0,\;\;\;   \varepsilon_1+\varepsilon_2\leq1.
\]
In contrast, subadditivity at a fixed smoothing radius fails in general. We demonstrate this already for two-qubit maximally entangled states.
\item Explicit estimates on the approximation error $\,E_F(\rho)-E_{F,\varepsilon}(\rho)\,$ applicable to any bipartite state $\rho$ with finite marginal rank/energy are obtained. These estimates give bounds on the rate of uniform convergence of $E_{F,\varepsilon}$ to $E_F$ on sets of states with bounded rank/energy of one of the marginal states.
\end{itemize}
\medskip

We have also compared the quantities
\[
 E_{F,\varepsilon}^{+\infty}(\rho)
 \doteq\limsup_{n\to+\infty}\frac1nE_{F,\varepsilon}(\rho^{\otimes n}),
 \qquad
 [E_F^\infty]_\varepsilon(\rho)
 \doteq\inf_{\sigma:\,T(\rho,\sigma)\leq\varepsilon}E_F^\infty(\sigma),
\]
and showed that they  do not coincide in general. Whether $E_{F,\varepsilon}^{\infty}(\rho)\geq [E_F^\infty]_\varepsilon(\rho)$ holds for every bipartite state $\rho$ remains open. We further show that $E_{F,\varepsilon}^{\infty}(\rho)$ is bounded above by the operational entanglement cost at fixed error, and that this bound is strict for maximally entangled states whenever $\,0<\varepsilon<1$.

\section{Possible generalizations}
\label{sec:possible-generalizations}

The  Mora-Piani-Briegel construction of  $\varepsilon$-measure (proposed in \cite{MPB}) can be applied to an arbitrary
entanglement measure on the set of states of a bipartite quantum system $AB$ \cite{P&V,4H} , i.e. such a convex function
$E_X$ on the set $\,\St(\mathcal H_{AB})\,$ of states of this system taking values in $\,[0,+\infty]\,$  and satisfying the following conditions:
\begin{enumerate}[1)]
\item $\;E_X(\rho)=0\quad\Longleftrightarrow\quad
 \rho\,\textrm{ is a separable state}$;
\item $\;E_X(\rho)=S(\rho_A)=S(\rho_B)\,$ for every pure state $\,\rho\in\mathcal \St(\mathcal H_{AB})$;
\item selective LOCC monotonicity.
\end{enumerate}

Define the $\varepsilon$-smoothed version of an entanglement measure $E_X$ as
\begin{equation}
E_{X,\varepsilon}(\rho)\doteq
\inf_{\sigma:\,T(\rho,\sigma)\leq\varepsilon}E_X(\sigma),
\qquad 0\leq\varepsilon\leq1,
\label{eq:general-MPB}
\end{equation}
and introduce the associated Moreau--Yosida approximation
\begin{equation}
 E_X^\lambda(\rho)\doteq
 \inf_{\sigma\in\St(\mathcal H_{AB})}
 \left\{E_X(\sigma)+\frac{1}{\lambda}T(\rho,\sigma)\right\},
 \qquad \lambda>0.
 \label{eq:general-MY}
\end{equation}

A detailed analysis of the proof of the results in the main part of the article provided with some general facts from convex analysis
allow us to prove the following\smallskip

\begin{proposition}\label{gen}  \emph{Let $E_X$ be a convex lower-semicontinous function
on the set $\,\St(\HH_{AB})$  taking values in $\,[0,+\infty]\,$ and  satisfying the above conditions 1-3.
Then all the results on properties of the functions $E_{F,\varepsilon}$ and  $E_F^\lambda$  presented
in Sections 3.1-3.3 are valid for the functions $E_{X,\varepsilon}$ and  $E_X^\lambda$ defined in (\ref{eq:general-MPB}) and  (\ref{eq:general-MY})
except for the lower bound (\ref{eq:EF-lower-bound}) and the statements involving the particular ensemble
representation of the EoF and the corresponding set of measures $\P_{\rho,\varepsilon}(\HH_{AB})$ and its subsets. The variational expression for
$E_{X,\varepsilon}$ has the form
\[
 E_{X,\varepsilon}(\rho)=\sup_{\Lambda\in \mathfrak{X}(\varepsilon)}
 \operatorname{Tr}\Lambda\rho,
\]
where, writing $D(\Lambda)$ for the spectral diameter of $\Lambda$,}
\begin{equation*}
\mathfrak{X}(\varepsilon)=\left\{\Lambda\in\mathfrak B_{\rm sa}(\mathcal H_{AB})\left|\,\exists\lambda>0\, : \,D(\Lambda)\leq\frac1\lambda,\, \operatorname{Tr}\Lambda\sigma\leq E_X(\sigma)-\frac{\varepsilon}{\lambda}\;\; \forall\sigma\in\mathfrak S(\mathcal H_{AB})\!\right.\right\}.
\end{equation*}

\emph{In addition,
\begin{itemize}
  \item if $\,E_X(\rho)\geq E_R(\rho)\,$ for all $\,\rho\in \St(\HH_{AB}),$ then the lower bound (\ref{eq:EF-lower-bound}) is valid for the function $E_{X,\varepsilon}$;
  \item if the function $E_X$ is subadditive w.r.t. tensor products, then all the results presented
in Sections 3.4 are valid  for the function $E_{X,\varepsilon}$.
\end{itemize}}
\end{proposition}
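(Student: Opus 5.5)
The plan is to go through the proofs in Sections 3.1--3.4 one by one and record exactly which properties of $E_F$ each step uses. For the general $E_X$ each of these properties will then either be one of the hypotheses (convexity, lower semicontinuity, values in $[0,+\infty]$, conditions 1--3) or a consequence of them.

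\textbf{Theorem \ref{P-1}.} In claim A, the truncated states $\sigma_n^X$ live in a system with one factor of dimension $n$. Here we need $E_X(\sigma)\le \ln n$ for such states. Conditions 2 and 3 should give this: every state of a $\mathbb{C}^n$-supported bipartite system is obtained from $\Omega_n$ by a (nonselective) LOCC channel via teleportation, and $E_X(\Omega_n)=\ln n$. Claims C, D and E use only convexity, $E_X=0$ on separable states (condition 1), lower semicontinuity (in D) and claim A, so their proofs transfer verbatim. Claim B's lower bound uses $E_F\ge E_R$ through Remark \ref{imp}; this explains why it is excluded and then restored under the extra hypothesis $E_X\ge E_R$. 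The ``if and only if'' part of B follows from the same argument once the lower bound is available; without it, I would replace it by a direct argument using Lemma \ref{atd}, which is purely geometric.

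\textbf{Moreau--Yosida side and the dual formulas.} Claims F, G and H and Proposition \ref{mm} depend on properties of $E_X^\lambda$ taken from \cite{EF}: continuity, the monotonicity $E_X^\lambda(\Phi(\rho))\le E_X^\lambda(\rho)$, and the variational formula. I would reprove these for $E_X$ with general convex analysis. Continuity comes from the fact that the infimal convolution of a proper function bounded below with the Lipschitz function $\tfrac1\lambda T$ is $\tfrac1\lambda$-Lipschitz with respect to $T$. Monotonicity under $\Phi$ follows by substituting $\Phi(\sigma)$ into the infimum.

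The variational formula $E_X^\lambda(\rho)=\sup\{\Tr\Lambda\rho : \Lambda\in\mathfrak{X}_0(1/\lambda)\}$ is the main obstacle. Here $\mathfrak{X}_0(d)$ is the set of $\Lambda\in\B_{\rm sa}(\HH_{AB})$ with $D(\Lambda)\le d$ and $\Tr\Lambda\sigma\le E_X(\sigma)$ for all states $\sigma$. For this I would use Fenchel duality in the pair $(\T_{\rm sa},\B_{\rm sa})$. Extend $E_X$ by $+\infty$ off the state space; it is then convex and lower semicontinuous, hence weakly lower semicontinuous. The conjugate of the norm-type term $\tfrac1{2\lambda}\|\cdot\|_1$ restricted to trace-zero differences is the indicator of operators of spectral diameter at most $1/\lambda$. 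The infimal convolution of two proper convex lower semicontinuous functions, one of which is continuous (the norm term), equals the biconjugate. This yields the supremum over $\Lambda$ satisfying both constraints, and the sup over $\Lambda$ in the norm dual $\B_{\rm sa}$ is legitimate because the norm term is everywhere continuous. Proposition \ref{mm} itself uses only claim D and one-dimensional Fenchel--Moreau, so it transfers unchanged. Proposition \ref{VEP}, with $\mathfrak{X}(\varepsilon)$ in place of $\mathfrak{F}(\varepsilon)$, then follows exactly as before using the equivalent description of Remark \ref{D-r}. The statements about $\P_{\rho,\varepsilon}$, and Corollary \ref{Edc} in its measure form, are excluded because they rely on the ensemble formula.

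\textbf{Section 3.4.} Under subadditivity of $E_X$, the weak subadditivity is proved directly as in Remark \ref{wsa-r}. Take $\rho'$ and $\sigma'$ $\theta$-almost optimal in the two balls. Then $T(\rho\otimes\sigma,\rho'\otimes\sigma')\le\varepsilon_1+\varepsilon_2$, and subadditivity bounds $E_X(\rho'\otimes\sigma')$. Example \ref{nsa} uses only $E_X\ge E_R$ (or just condition 1 plus lower semicontinuity in finite dimensions, which gives positivity of the minimum on the compact ball) and condition 2, so it carries over. Corollary \ref{tau} uses weak subadditivity together with Corollary \ref{E-r}, which follows from claim G and condition 3.
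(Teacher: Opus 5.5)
Your plan matches what the paper has in mind: go through each proof, replace the facts about $E_F^\lambda$ imported from \cite{EF} by general convex analysis, and prove weak subadditivity directly. Most steps go through. The gap is the qualitative part of claim B for a general $E_X$, namely that $E_{X,\varepsilon}(\rho)=0$ implies $U_\varepsilon(\rho)\cap\St^{AB}_{\rm sep}\neq\varnothing$ \emph{without} assuming $E_X\geq E_R$. The proposition does not exclude this part, and a ``purely geometric'' use of Lemma \ref{atd} cannot give it. That lemma only turns $d(\rho,\St^{AB}_{\rm sep})\leq\varepsilon$ into an actual separable state in the closed ball. The real work is to derive $d(\rho,\St^{AB}_{\rm sep})\leq\varepsilon$ from a sequence $\sigma_n\in U_\varepsilon(\rho)$ with $E_X(\sigma_n)\to0$. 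Condition 1 and lower semicontinuity give only pointwise positivity of $E_X$ off $\St^{AB}_{\rm sep}$. In infinite dimensions $U_\varepsilon(\rho)$ is not compact, not even weakly. So nothing in your ingredients stops a positive lower semicontinuous function from having zero infimum on $U_\varepsilon(\rho)$, even when this ball is at positive distance from $\St^{AB}_{\rm sep}$. A uniform bound like $E_X\geq 2\,d(\cdot,\St^{AB}_{\rm sep})^2$ is exactly what $E_X\geq E_R$ would supply, and it is not available here.

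The missing idea (Remark \ref{q-a}) is to use condition 3 to push the problem into finite dimensions. Take local truncation channels $\Phi_k$, e.g.\ of the form (\ref{ch-form}) with a pure product $\tau$ and projectors increasing to $I_A$, $I_B$. These are LOCC, have finite-dimensional outputs, and satisfy $T(\Phi_k(\omega),\omega)\to0$ for every state $\omega$. For fixed $k$ you have $T(\Phi_k(\rho),\Phi_k(\sigma_n))\leq\varepsilon$ and $E_X(\Phi_k(\sigma_n))\leq E_X(\sigma_n)\to0$. Compactness of the finite-dimensional output state space gives a limit point $\omega_k$. It is separable by lower semicontinuity and condition 1, and it satisfies $T(\Phi_k(\rho),\omega_k)\leq\varepsilon$. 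Hence $d(\rho,\St^{AB}_{\rm sep})\leq T(\rho,\Phi_k(\rho))+\varepsilon\to\varepsilon$, and only then does Lemma \ref{atd} produce a separable state in $U_\varepsilon(\rho)$. The truncation error is controlled only at the fixed state $\rho$; $T(\Phi_k(\sigma_n),\sigma_n)$ need not be small uniformly in $n$.

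A smaller point concerns your Fenchel-duality step. The term $\frac1{2\lambda}\|\cdot\|_1$ restricted to trace-zero operators is not continuous, so your stated justification is off. The conclusion still holds: the infimal convolution is Lipschitz on the closed hyperplane $\{\Tr X=1\}$ and $+\infty$ off it. It is therefore proper, convex and lower semicontinuous, and Fenchel--Moreau applies.
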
 \smallskip

\begin{remark}\label{q-a}
The qualitative assertion
\begin{equation}\label{qqq}
E_{X,\varepsilon}(\rho)=0
\quad\Longleftrightarrow\quad
U_\varepsilon(\rho)\cap
\St_{\mathrm{sep}}^{AB}\ne\varnothing,
\end{equation}
where $\,\St_{\mathrm{sep}}^{AB}$ is the set of separable states and
\[
 U_\varepsilon(\rho)\doteq
 \{\sigma\in\St(\mathcal H_{AB})\,|\,
 T(\rho,\sigma)\le\varepsilon\},
\]
holds without the additional condition $\,E_X(\rho)\ge E_R(\rho),$ which is needed
for the quantitative lower bound in (\ref{eq:EF-lower-bound}), but not for
this qualitative equivalence.

To prove (\ref{qqq}) without using the condition $\,E_X(\rho)\ge E_R(\rho)\,$  assume that
$E_{X,\varepsilon}(\rho)=0$ for some $\varepsilon>0$ (the case $\varepsilon=0$ is trivial). Then there exists a sequence of states $\{\sigma_n\}$ such that
\begin{equation}
 T(\rho,\sigma_n)\le\varepsilon,
 \qquad
 \lim_{n\to+\infty}E_X(\sigma_n)=0.
 \label{eq:general-zero-sequence}
\end{equation}

Let $\{\Phi_k\}$ be a sequence of local truncation channels with
finite-dimensional output spaces such that
\begin{equation}\label{scc}
\lim_{k\to\infty}T(\Phi_k(\omega),\omega)=0
\end{equation}
for every state $\omega\in\St(\HH_{AB})$ (for example, the channels defined in (\ref{ch-form}) with pure product state $\tau$ and increasing sequences of projectors $P^k_A$ and $P^k_B$ strongly converging to the unit operators $I_A$ and $I_B$).

Fix $k$. Since the channel $\Phi_k$ is trace-distance contractive and the function $E_X$
is non-increasing under LOCC, we have
\[
 T\bigl(\Phi_k(\rho),\Phi_k(\sigma_n)\bigr)
 \le\varepsilon,
 \qquad
 E_X\bigl(\Phi_k(\sigma_n)\bigr)
 \le E_X(\sigma_n).
\]
The set of states supported on the finite-dimensional output space
of $\Phi_k$ is compact. Hence, after passing to a subsequence
(which may depend on $k$), we may assume that
\[
 \Phi_k(\sigma_{n_j})\rightarrow\omega_k\quad\textrm{ as }\quad j\to +\infty
\]
in trace norm for some state $\omega_k$ in that finite-dimensional
space (the support of the range of $\Phi_k$). By lower semicontinuity of $E_X$ and
\eqref{eq:general-zero-sequence} we have
\[
 0\le E_X(\omega_k)
 \le\liminf_{j\to+\infty}E_X\bigl(\Phi_k(\sigma_{n_j})\bigr)
 =0.
\]
Thus $E_X(\omega_k)=0$. By the assumed validity of the above property 1 for the function $E_X$, the state $\omega_k$ is separable. Moreover,
continuity of the trace distance gives $\,T\bigl(\Phi_k(\rho),\omega_k\bigr)\le\varepsilon$.

Consequently, by the triangle inequality we have
\[
 d\!\left(\rho,\St^{AB}_{\rm sep}\right)
\doteq\inf_{\tau\in\St_{\mathrm{sep}}^{AB}}
           T(\rho,\tau)\le T(\rho,\omega_k)\le T\bigl(\rho,\Phi_k(\rho)\bigr)+\varepsilon.
\]
By taking the limit as $\,k\to+\infty$ and using (\ref{scc}), we obtain
$\,d\!\left(\rho,\St^{AB}_{\rm sep}\right)\le\varepsilon$.

By Lemma \ref{atd} in the Appendix, the distance from any bipartite
state to the set $\,\St^{AB}_{\rm sep}\,$ is attained. Hence there
exists a separable state $\tau_\rho$ such that
\[
 T(\rho,\tau_\rho)= d\!\left(\rho,\St^{AB}_{\rm sep}\right)\le\varepsilon.
\]
It follows that
$\tau_\rho\in U_\varepsilon(\rho)\cap
\St^{AB}_{\rm sep}$,  proving the
nontrivial implication $"\Rightarrow"$ in \eqref{qqq}. $\Box$
\end{remark}
\medskip

The lower semicontinuity of the function $E_X$ is used in many places. In particular, it is necessary
to prove that
\begin{equation}\label{LSC}
E_{X,\varepsilon}(\rho)\nearrow E_{X,0}(\rho)\doteq E_X(\rho)\leq +\infty \quad\textup{ as }\quad \varepsilon\searrow 0^+.
\end{equation}
Indeed, since the continuity of the function $E_{X,\varepsilon}$ (with $\varepsilon>0$) on $\St(\HH_{AB})$ can be proved without using the lower semicontinuity property of $E_X$, it is not hard to show that
(\ref{LSC}) is equivalent to local lower semicontinuity of the function $E_X$ at the state $\rho$.

The lower semicontinuity of the function $E_X$  is essentially used in the proof of the second of the variational relations
$$
E_X^\lambda(\rho)
=
\inf_{0\leq\varepsilon\leq1}
\left\{
E_{X,\varepsilon}(\rho)+\frac{\varepsilon}{\lambda}
\right\}
\quad
\textrm{and}
\quad
E_{X,\varepsilon}(\rho)
=
\sup_{\lambda>0}
\left\{
E_X^\lambda(\rho)-\frac{\varepsilon}{\lambda}
\right\}
$$
which many of the results of this article are based on.

At the same time, such analytical properties as convexity, global continuity, weak subadditivity and nonselective LOCC monotonicity
of the function $E_{X,\varepsilon}$ as well as the first of the above variational relations are proved  without the lower semicontinuity assumption.\medskip

By applying Proposition \ref{gen}  to the relative entropy of
entanglement
\[
E_R(\rho)
=
\inf_{\omega\in\St_{\rm sep}^{AB}}
D(\rho\shs\Vert\omega),\quad \rho\in \St(\HH_{AB}),
\]
where $\,D(\cdot\Vert\cdot)\,$ is the quantum relative entropy \cite{P&V,4H,L-Sh-1}, we obtain the following.\smallskip

\begin{corollary}\label{gen++}  \emph{All the results on properties of the functions $E_{F,\varepsilon}$ and  $E_F^\lambda$  presented
in Sections 3.1-3.4 are valid for the functions $E_{R,\varepsilon}$ and  $E_R^\lambda$ defined in (\ref{eq:general-MPB}) and  (\ref{eq:general-MY})
with $E_X=E_R$ except for statements involving the particular ensemble
representation of the EoF and the corresponding set of measures $\P_{\rho,\varepsilon}(\HH_{AB})$ and its subsets.}
\end{corollary}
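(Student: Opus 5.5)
The plan is to apply Proposition \ref{gen} with $E_X=E_R$. This requires checking that $E_R$ is convex, lower semicontinuous and $[0,+\infty]$-valued on $\St(\HH_{AB})$, and that it satisfies conditions 1--3 in general infinite-dimensional settings. Once this is done, Proposition \ref{gen} transfers everything in Sections 3.1--3.3 except the items involving the ensemble representation. The two additional bullet conditions of Proposition \ref{gen} then give the remaining claims: the lower bound (\ref{eq:EF-lower-bound}) and the Section 3.4 results.

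Most properties can be quoted from the literature.
\begin{itemize}
\item Convexity and nonnegativity of $E_R$ follow from joint convexity and nonnegativity of the relative entropy.
\item Selective LOCC monotonicity (condition 3) and the identity $E_R(\rho)=S(\rho_A)$ on pure states (condition 2) are standard, with infinite-dimensional versions in \cite{L-Sh-1}.
\item Lower semicontinuity of $E_R$ on $\St(\HH_{AB})$ in infinite dimensions is the one analytic input that is not obvious. I would quote it from \cite{L-Sh-1}, where it is derived from the weak* closedness of the cone generated by $\St^{AB}_{\rm sep}$ (the same fact used for Lemma \ref{atd}).
\item For condition 1, $E_R(\rho)=0$ gives a sequence of separable $\omega_n$ with $D(\rho\|\omega_n)\to0$, so $\omega_n\to\rho$ in trace norm by Pinsker's inequality. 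Closedness of $\St^{AB}_{\rm sep}$ then makes $\rho$ separable.
\end{itemize}

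The first bullet condition of Proposition \ref{gen}, namely $E_X\ge E_R$, is trivial here. So the lower bound (\ref{eq:EF-lower-bound}) holds; the paper's argument for it in Theorem \ref{P-1}B already passes through $E_R$ and Pinsker's inequality.

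The main obstacle is the second bullet condition, subadditivity of $E_R$ under tensor products, since Section 3.4 uses subadditivity of $E_R^\lambda$. The relevant bipartition is $(A_1A_2)(B_1B_2)$. Given separable $\omega_1,\omega_2$, the product $\omega_1\otimes\omega_2$ is separable with respect to this bipartition, and $D(\rho\otimes\sigma\|\omega_1\otimes\omega_2)=D(\rho\|\omega_1)+D(\sigma\|\omega_2)$. Taking infima gives $E_R(\rho\otimes\sigma)\le E_R(\rho)+E_R(\sigma)$. I would make sure that the additivity of the relative entropy on product states is valid when values may be infinite; it is, with the usual convention $a+\infty=\infty$.

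With all hypotheses verified, Proposition \ref{gen} yields all results of Sections 3.1--3.4 for $E_{R,\varepsilon}$ and $E_R^\lambda$, except the ensemble-representation statements.
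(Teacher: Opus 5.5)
Your proposal is correct and is essentially the paper's proof. The paper applies Proposition~\ref{gen} with $E_X=E_R$ and simply cites \cite{P&V,4H,L-Sh-1} for the facts that $E_R$ is convex, lower semicontinuous, subadditive, $[0,+\infty]$-valued and satisfies conditions 1--3 (the bound $E_X\ge E_R$ being trivial here). Your additional sketches of condition 1 (via Pinsker's inequality and trace-norm closedness of $\St^{AB}_{\rm sep}$) and of subadditivity (via product separable states and additivity of relative entropy) are both sound.
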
 \smallskip

\emph{Proof.} It suffices to note that $E_R$  is a convex lower-semicontinous subadditive function
on the set $\St(\HH_{AB})$  taking values in $\,[0,+\infty]\,$ and  satisfying the above conditions 1-3 \cite{P&V,4H,L-Sh-1}. $\Box$

It is worth to note that the global lower-semicontinuity of $E_R$ in the infinite-dimensional settings is proved in \cite{L-Sh-1}
using the same topological result (the weak*-closedness of the cone generated by the set  $\St_{\rm sep}(\mathcal H_{AB})$) which is used
in proving the existence of a nearest separable state for any state of a bipartite infinite-dimensional quantum system $AB$ presented in the Appendix.\smallskip

In finite-dimensions, explicit estimates on the approximation error $\,E_R(\rho)-E_{R,\varepsilon}(\rho)\,$  can be obtained (via the $E_R$-analog of Lemma \ref{vil})
by applying any continuity bound for the function $E_R$ \cite{D&H,W-CB,L-Sh-1}.  In infinite-dimensions, to obtain such estimates we need  $E_R$-analogs of the semicontinuity bounds for
the function $E_F$ with the rank/energy constraint presented in \cite{LCB+,AOE}  (as far as I know, no such analogues have been obtained yet).\newpage

\section{Appendix: On existence of a nearest separable state}

The set $\St^{AB}_{\rm sep}$ of separable states of a bipartite quantum system $AB$ is defined as a trace norm convex closure of the set of
all product states in $\St(\mathcal H_{AB})$  \cite{H-SCI,Wilde}. The distance from an arbitrary state $\rho$ in $\St(\mathcal H_{AB})$ to the set $\St^{AB}_{\rm sep}$ w.r.t. the trace norm metric  is defined as
\[
 d\!\left(\rho,\St^{AB}_{\rm sep}\right)=\textstyle\frac{1}{2}\inf_{\sigma\in\St^{AB}_{\rm sep}}\|\rho-\sigma\|_1.
\]
If at least one of the spaces $\HH_A$ and  $\HH_B$ is infinite-dimensional, then
the closed set $\St^{AB}_{\rm sep}$ is not compact. So, special arguments are required to show that for any state $\rho$ in $\St(\mathcal H_{AB})$
there exists a nearest separable state, i.e. such a state $\omega_\rho\in \St_{\rm sep}^{AB}$ at which the above
infimum is attained. Fortunately, such arguments can be found by using the weak*-closedness\footnote{The weak-* topology on $\T(\HH_{AB})$ is defined in the below proof of Lemma \ref{atd}.} of the cone generated by the set  $\St^{AB}_{\rm sep}$ \cite{L-Sh-1}.
\smallskip

\begin{lemma}\label{atd}
\emph{Let $A$ and $B$ be quantum systems of any dimensions. For every bipartite state $\rho$ of the system $AB$, there exists a state $\,\omega_\rho\in\St_{\rm sep}(\mathcal H_{AB})\,$ such that}
\begin{equation}\label{atd+}
 d\!\left(\rho,\St^{AB}_{\rm sep}\right)=\textstyle\frac{1}{2}\|\rho-\omega_\rho\|_1.
\end{equation}
\emph{Moreover, if $\,\{\omega_n\}\subset\St_{\rm sep}^{AB}\,$ is a minimizing sequence\footnote{This means that $\,\textstyle\frac{1}{2}\|\rho-\omega_n\|_1\to d\!\left(\rho,\St^{AB}_{\rm sep}\right)\,$ as $\,n\to+\infty$.} for a state $\rho$ then
\begin{itemize}
  \item the set $\,\mathfrak{L}(\{\omega_n\})\,$ of  weak-*  partial limits of the sequence $\,\{\omega_n\}\,$ is nonempty;
  \item equality (\ref{atd+}) holds with  $\,\omega_\rho=\tau+(1-\operatorname{Tr}\tau)\omega_0,$ where $\tau$ is  any operator in $\,\mathfrak{L}(\{\omega_n\})\,$ and $\omega_0$ is any state in $\,\St^{AB}_{\rm sep}$.
\end{itemize}}
\end{lemma}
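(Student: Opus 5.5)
Take a minimizing sequence $\{\omega_n\}\subset\St^{AB}_{\rm sep}$ for $\rho$. The unit ball of $\T(\HH_{AB})$ is the dual ball of the space $\mathfrak{K}(\HH_{AB})$ of compact operators. Since $\HH_{AB}$ is separable, this ball is weak-* compact and metrizable. Hence $\{\omega_n\}$ has a weak-* convergent subsequence $\omega_{n_k}\to\tau$, which gives $\mathfrak{L}(\{\omega_n\})\neq\varnothing$. Any such limit $\tau$ is positive with $\Tr\tau\le1$, because positivity and the bound $\Tr\le1$ are preserved under testing against finite-rank positive compacts.

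The key structural input is the result of \cite{L-Sh-1}: the cone generated by $\St^{AB}_{\rm sep}$ is weak-* closed. Every $\omega_{n_k}$ lies in this cone, so $\tau$ does too, i.e. $\tau=c\,\varsigma$ with $c=\Tr\tau\in[0,1]$ and $\varsigma$ separable (or $\tau=0$). Consequently, for any separable $\omega_0$, the operator $\omega_\rho=\tau+(1-\Tr\tau)\omega_0$ is a convex combination of separable states and therefore belongs to $\St^{AB}_{\rm sep}$.

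It remains to show $\frac12\|\rho-\omega_\rho\|_1\le d\doteq d(\rho,\St^{AB}_{\rm sep})$; the reverse inequality is trivial. I would use the identity $\frac12\|\rho-\sigma\|_1=\Tr(\sigma-\rho)_+$ for states $\rho,\sigma$, which holds because $\Tr(\rho-\sigma)=0$. Equivalently, $\frac12\|\rho-\sigma\|_1=\sup_{0\le P\le I}\Tr P(\sigma-\rho)$.

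First I establish weak-* lower semicontinuity in a compensated form. For finite-rank $0\le P\le I$, weak-* convergence gives $\Tr P(\tau-\rho)=\lim_k\Tr P(\omega_{n_k}-\rho)\le d$. Taking the supremum over finite-rank $P$, which is the same as the supremum over all $P$ by strong approximation, yields $\Tr(\tau-\rho)_+\le d$.

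Next I compare with $\omega_\rho$, using $\|\tau-\rho+(1-\Tr\tau)\omega_0\|_1\le\|\tau-\rho\|_1+(1-\Tr\tau)$. Since $\Tr(\tau-\rho)=\Tr\tau-1$, the negative part satisfies $\Tr(\tau-\rho)_-=\Tr(\tau-\rho)_++1-\Tr\tau$. Therefore $\|\tau-\rho\|_1\le 2d+1-\Tr\tau$, and so $\|\rho-\omega_\rho\|_1\le 2d+2(1-\Tr\tau)$. This bound is too weak, so I would instead work directly with the sup formula:
\[
\Tr P(\omega_\rho-\rho)=\Tr P(\tau-\rho)+(1-\Tr\tau)\Tr P\omega_0 .
\]
I need the right-hand side to be at most $d$. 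The idea is that the mass lost in the weak-* limit escapes to infinity, so $\omega_{n_k}$ near infinity contributes to $\Tr P_\infty(\omega_{n_k}-\rho)$, where $\rho$ is negligible.

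Concretely, fix finite-rank $P$ and a finite-rank projection $Q$ with $\Tr(I-Q)\rho<\delta$ and $QP\approx P$. Split the test operator as $P'=QPQ+(I-Q)$. Along the subsequence,
\[
\Tr P'(\omega_{n_k}-\rho)\to \Tr QPQ(\tau-\rho)+(1-\Tr Q\tau)-\Tr(I-Q)\rho,
\]
using $\Tr(I-Q)\omega_{n_k}=1-\Tr Q\omega_{n_k}\to 1-\Tr Q\tau$. The left side is at most $d$ for every $k$. Since $1-\Tr Q\tau\ge 1-\Tr\tau\ge(1-\Tr\tau)\Tr P\omega_0$, letting $Q\uparrow I$ and $\delta\to0$ gives $\Tr P(\omega_\rho-\rho)\le d$. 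This holds for all finite-rank $P$, hence for all $P$, which proves (\ref{atd+}).

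The main obstacle is this last step: controlling the escaped mass $1-\Tr\tau$, and making the approximation $QPQ\approx P$ uniform while the test operator $P'$ is built from $Q$. The remaining points are routine: checking that $P'$ satisfies $0\le P'\le I$, and controlling the error terms $\Tr(P-QPQ)(\cdot)$ uniformly in $k$, which is possible since all the operators involved have trace norm at most $2$.
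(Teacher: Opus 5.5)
Your proof is correct and follows essentially the paper's route: weak-* compactness of the trace-class unit ball, weak-* closedness of the cone generated by $\St^{AB}_{\rm sep}$, and a split into a finite-rank block (where weak-* convergence becomes norm convergence) and its complement (which picks up the escaped mass $1-\Tr\tau$). The only difference is that you work with the dual formula $\frac12\|\rho-\sigma\|_1=\sup_{0\le P\le I}\Tr P(\sigma-\rho)$ and the test operator $QPQ+(I-Q)$, whereas the paper uses the pinching inequality $\|\delta_n\|_1\ge\|P\delta_nP\|_1+|\Tr(I-P)\delta_n|$ to get $\|\rho-\tau\|_1+(1-\Tr\tau)\le 2d$ and then applies the triangle inequality.

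One small correction: $\Tr P'(\omega_{n_k}-\rho)$ is bounded by $\frac12\|\rho-\omega_{n_k}\|_1$, which is $\ge d$ and only tends to $d$. So it is the limit in $k$, not each term, that is at most $d$ — which is all you use. Also, since you let $k\to\infty$ before $Q\uparrow I$, the uniformity issue you flag as the main obstacle never actually arises.
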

\medskip
\begin{remark}\label{ard-r}
Example \ref{nue} below shows that a nearest separable state is not unique, in general.
\end{remark}\smallskip

\emph{Proof.} Choose a minimizing sequence $\,\{\omega_n\}\subset\St^{AB}_{\rm sep}\,$ such that
\[
 \|\rho-\omega_n\|_1\rightarrow 2d(\rho,\St^{AB}_{\rm sep})\quad\textrm{ as }\;\;n\to +\infty.
\]
This sequence is bounded in trace norm. Since the space $\mathcal H_{AB}$ is separable, the unit ball of the trace-class operators is metrizable in the weak-* topology on $\T(\HH_{AB})$ induced
by the Banach space of all compact operators on $\mathcal H_{AB}$ with the operator norm  (see details in \cite[Section 2.1]{L-Sh-1}). Hence, the Banach--Alaoglu theorem implies the relative compactness of $\St^{AB}_{\rm sep}$ in the weak-* topology. So, the above sequence contains a convergent subsequence.  Relabel it as $\{\omega_n\}$ and assume that $\tau$ is its weak-* limit.

Since the unit ball in $\T(\HH_{AB})$ is closed in the weak-* topology, we conclude that  $\operatorname{Tr}\tau\leq1$. Moreover,  $\tau\in\operatorname{cone}(\St^{AB}_{\rm sep})$ because of the weak*-closedness of this cone \cite[Corollary 22]{L-Sh-1}. Write $r=\operatorname{Tr}\tau$.

We next prove the key estimate
\begin{equation}
 2d(\rho,\St^{AB}_{\rm sep})
 \geq \|\rho-\tau\|_1+(1-r).
 \label{eq:key-estimate}
\end{equation}
Let $P$ be any finite-rank orthogonal projector and put $Q=I-P$. For each $n$, the operator $\delta_n=\rho-\omega_n$ is self-adjoint. The pinching inequality for the trace norm gives
\[
 \|\delta_n\|_1\geq\|P\delta_nP\|_1+\|Q\delta_nQ\|_1.
\]
Since $|\operatorname{Tr}\varsigma|\leq\|\varsigma\|_1$ for every self-adjoint trace-class operator $\varsigma$, it follows that
\begin{align}
 \|\rho-\omega_n\|_1
 &\geq \|P(\rho-\omega_n)P\|_1
       +\big|\operatorname{Tr}\big(Q(\rho-\omega_n)\big)\big|. \label{eq:pinching}
\end{align}
For a fixed finite-rank operator $P$, the weak-* convergence implies the convergence of the  operators $P(\rho-\omega_n)P$ in trace norm and convergence of their traces. So, since $\operatorname{Tr}\rho=\operatorname{Tr}\omega_n=1$, we have
\[
 \operatorname{Tr}Q(\rho-\omega_n)
 =\operatorname{Tr}P\omega_n-\operatorname{Tr}P\rho\;\;\rightarrow\;\;
 \operatorname{Tr}P\tau-\operatorname{Tr}P\rho\quad\textrm{ as }\;\;n\to +\infty.
\]
Thus, by taking the limit in \eqref{eq:pinching} as $\,n\to+\infty\,$ we obtain
\begin{align}
 2d(\rho,\St^{AB}_{\rm sep})
 &\geq \|P(\rho-\tau)P\|_1
       +\big|\!\operatorname{Tr}P\tau-\operatorname{Tr}P\rho\,\big|.
 \label{eq:finiteP}
\end{align}
Now choose an increasing sequence of finite-rank projectors $P_k$ strongly converging to the unit operator $I_{AB}$ . For every trace-class operator $\varsigma$,
$\,\|P_k\varsigma P_k-\varsigma\|_1\rightarrow0\,$ as $\,k\to +\infty$. In particular,
\[
 \|P_k(\rho-\tau)P_k\|_1\rightarrow\|\rho-\tau\|_1,\quad
 \operatorname{Tr}P_k\rho\rightarrow1,
 \qquad
 \operatorname{Tr}P_k\tau\rightarrow r\quad\textrm{ as }\;\;k\to +\infty.
\]
Taking $\,k\to+\infty\,$ in \eqref{eq:finiteP} with $P$ replaced by $P_k$ proves \eqref{eq:key-estimate}.\medskip

Fix any separable state $\omega_0$. Since $\tau\in\operatorname{cone}(\St^{AB}_{\rm sep})$ and $\operatorname{Tr}\tau=r$, the operator
\[
 \widetilde\omega\doteq\tau+(1-r)\omega_0
\]
is a state in $\St^{AB}_{\rm sep}$. By the definition of $d(\rho,\St^{AB}_{\rm sep})$  and \eqref{eq:key-estimate} we have
\[
 2d(\rho,\St^{AB}_{\rm sep})
 \leq\|\rho-\widetilde\omega\|_1
 \leq\|\rho-\tau\|_1+(1-r)
 \leq 2d(\rho,\St^{AB}_{\rm sep}).
\]
So all the inequalities are equalities and, hence, $\widetilde\omega$ is a closest separable state to the state $\rho$. $\Box$ \bigskip

The following example shows that the nearest separable state in Lemma \ref{atd} need not be unique,
even for a two-qubit maximally entangled state.\smallskip\newpage

\begin{example}\label{nue} Let $\,A=B=\mathbb{C}^2.\,$ Consider the maximally entangled state
\[
  \Omega_2=|\phi_+\rangle\langle\phi_+|,
  \qquad
  |\phi_+\rangle=\frac{|00\rangle+|11\rangle}{\sqrt{2}}.
\]
Let
\[
  |\phi_-\rangle=\frac{|00\rangle-|11\rangle}{\sqrt{2}},
  \qquad
  |\psi_+\rangle=\frac{|01\rangle+|10\rangle}{\sqrt{2}},
\]
and consider the  separable states
\[
  \tau_z=\frac12\left(
       |\phi_+\rangle\langle\phi_+|
      +|\phi_-\rangle\langle\phi_-|
  \right)=\frac12\bigl(|00\rangle\langle00|
                      +|11\rangle\langle11|\bigr)
\]
and
\[
  \tau_x=\frac12\left(
       |\phi_+\rangle\langle\phi_+|
      +|\psi_+\rangle\langle\psi_+|
  \right)=\frac12\bigl(|\vartheta_+\otimes\vartheta_+\rangle\langle\vartheta_+\otimes\vartheta_+|
                      +|\vartheta_-\otimes\vartheta_-\rangle\langle\vartheta_-\otimes\vartheta_-|\bigr),
\]
where $\,|\vartheta_\pm\rangle=\frac{|0\rangle\pm|1\rangle}{\sqrt{2}}$. Both states $\tau_z$ and $\tau_x$ are separable, since they are convex
combinations of pure product states. Moreover, they are distinct.

We have
\[
\begin{aligned}
  \Omega_2-\tau_z
    &=\frac12\left(
       |\phi_+\rangle\langle\phi_+|
      -|\phi_-\rangle\langle\phi_-|
    \right),\\
  \Omega_2-\tau_x
    &=\frac12\left(
       |\phi_+\rangle\langle\phi_+|
      -|\psi_+\rangle\langle\psi_+|
    \right).
\end{aligned}
\]
Since the vectors in each difference are orthogonal and normalized,
both differences have trace norm equal to $1$. Hence, $\,T(\Omega_2,\tau_z)=T(\Omega_2,\tau_x)=\frac12$.

It remains to show that this distance is minimal. For any
unit vectors $\,|\alpha\rangle=(a_0,a_1)^\top$ and $\,|\beta\rangle=(b_0,b_1)^\top$ in $\mathbb C^2$,
the Cauchy--Schwarz inequality gives
\[
  |\langle\phi_+|\alpha\otimes\beta\rangle|^2
  =\frac12|a_0b_0+a_1b_1|^2
  \leq\frac12.
\]
By convexity, every separable state $\omega$ therefore satisfies $\,\operatorname{Tr}\Omega_2\omega\leq\frac12$.
Since the spectral diameter of the operator $\Omega_2$ is equal to $1$, we have
\[
  \frac12\|\Omega_2-\omega\|_1
  \geq
  \left|\operatorname{Tr}\Omega_2(\Omega_2-\omega)\right|=1-\operatorname{Tr}\Omega_2\omega\geq\frac12.
\]
Thus $\,\frac12\inf_{\omega\in \St^{AB}_{\rm sep}}\|\Omega_2-\omega\|_1=\frac12$,
and both $\tau_z$ and $\tau_x$ attain this infimum. Since
$\tau_z\ne\tau_x$, the nearest separable state is not unique.
\end{example}

\paragraph{Use of AI tools.}
In preparing this work,  ChatGPT 5.6 (free version) was used. The AI agent assisted with literature search,
finding counter-examples and optimizing proofs. All mathematical content was checked by the author, who takes full
responsibility for it.
\medskip

I am grateful to L.Lami for the recent collaboration that helped me in  this research.

\end{document}